\documentclass[11pt]{article}
\usepackage[T1]{fontenc}
\usepackage{graphicx}
\usepackage{color}
\usepackage{amsmath,amssymb}
\usepackage{latexsym, amsthm}
\usepackage{enumerate}
\usepackage{bbm}
\usepackage{hyperref}
\usepackage{authblk}
\usepackage{setspace} 
\usepackage{accents}
\usepackage{ulem} 
\textwidth=6.6in
 \textheight=9.2in
  \headheight.2cm
   \topmargin-0.5in
\evensidemargin0.05in
\oddsidemargin0.05in
\large\normalsize

\hbadness3000
\vbadness30000
\parindent=0.3in
\parskip=3pt plus 1pt minus 1pt

\theoremstyle{plain}
\newtheorem{theorem}{Theorem}[section]
\newtheorem{corollary}[theorem]{Corollary}
\newtheorem{lemma}[theorem]{Lemma}
\newtheorem{proposition}[theorem]{Proposition}
\newtheorem{definition}[theorem]{Definition}
\theoremstyle{definition}
\newtheorem{remark}[theorem]{Remark}

\numberwithin{equation}{section}
\setcounter{page}{1}

\title{
\bf{Amplifying the Randomness of Weak Sources Correlated With Devices}
}

\author[1]{Hanna Wojew\'odka}
\author[2]{Fernando G. S. L. Brand\~{a}o}
\author[3]{Andrzej Grudka}
\author[4]{Karol Horodecki}
\author[5]{Micha\l{} Horodecki}
\author[6]{Pawe\l{} Horodecki}
\author[5]{Marcin Paw\l{}owski}
\author[7]{Ravishankar Ramanathan}
\author[8]{Maciej Stankiewicz}
\affil[1]{\small Institute of Mathematics, University of Silesia in Katowice, Bankowa 14, 40-007 Katowice, Poland}
\affil[2]{\small Quantum Architectures and Computation Group, Microsoft Research, Redmond, WA, USA, and also with the Department of Computer Science, University College London WC1E 6BT, UK}
\affil[3]{\small Faculty of Physics, Adam Mickiewicz University, 61-614 Pozna{\'n}, Poland}
\affil[4]{\small Institute of Informatics and the National Quantum Information Centre, Faculty of Mathematics, Physics and Informatics, University of Gda{\'n}sk, 80-309 Gda{\'n}sk, Poland}
\affil[5]{\small Institute of Theoretical Physics and Astrophysics and the National Quantum Information Centre, Faculty of Mathematics, Physics and Informatics, University of Gda{\'n}sk, 80-309 Gda{\'n}sk, Poland}
\affil[6]{\small Faculty of Applied Physics and Mathematics and the National Quantum Information Centre, Gda{\'n}sk University of Technology, 80-233 Gda{\'n}sk, Poland}
\affil[7]{\small Laboratoire d'Information Quantique, Universit{\'e} Libre de Bruxelles, Belgium}
\affil[8]{\small National Quantum Information Centre, Faculty of Mathematics, Physics and Informatics, University of Gda{\'n}sk, 80-309 Gda{\'n}sk, Poland}

\date{}
\begin{document}
\maketitle
\begin{abstract}
The problem of device-independent randomness amplification against no-signaling adversaries has so far been studied under the assumption that the weak source of randomness is uncorrelated with the (quantum) devices used in the amplification procedure. In this work, we relax this assumption, and reconsider the original protocol of Colbeck and Renner 
	using a Santha-Vazirani (SV) source. To do so, we introduce an SV-like condition for devices, namely that any string of SV source bits remains weakly random conditioned upon any other bit string from the same SV source and the outputs obtained when this further string is input into the devices. Assuming this condition, we show that a quantum device using a~singlet state to violate the chained Bell inequalities leads to full randomness in the asymptotic scenario of a~large number of settings, for a restricted set of SV sources (with $0 \leq \varepsilon < (2^{(1/12)} - 1)/(2(2^{(1/12)} + 1)) \approx 0.0144$). 
	We also study a device-independent protocol that allows for correlations between the sequence of boxes used in the protocol and the SV source bits used to choose the particular box from whose output the randomness is obtained. 
	{Assuming the SV-like condition for devices, we show that the honest parties can achieve amplification of the weak source, for the parameter range $0 \leq \varepsilon<0.0132$, against a class of attacks given as a mixture of product box sequences, made of extremal no-signaling boxes, with additional symmetry conditions}. Composable security proof against this class of attacks is provided. 
\end{abstract}
{\small \noindent
{\bf Keywords:} randomness, randomness amplification, quantum information, Santha-Vazirani source
}\\

\section{Introduction}

In many applications, like numerical simulations, cryptography or gambling, just to name a~few, free randomness is desired due to the fact that a wide range of results is based on it. In practice, however, random sources are rarely private and only partially weak sources of randomness are available. That is why the problem of randomness amplification became useful and worth investigating. Overall, the idea is to use the inputs from a~partially random source and obtain perfectly random output bits. In classical information theory, randomness amplification from a~single weak source is unattainable (\cite{sv}). However, it becomes possible, if the no-signaling principle is assumed and quantum-mechanical correlations are used. Such correlations are revealed operationally through the violation of Bell inequalities.
 
As a model of a weak source to be amplified, we consider an $\varepsilon$-SV source (named after Santha and Vazirani \cite{sv}), where $\varepsilon$ is a~parameter which indicates how far we are from full randomness. {An $\varepsilon$-SV source is given by a~probability distribution $P(\varphi_1,\ldots,\varphi_n,\ldots)$ over bit strings such that
\begin{align}
	\label{SV}
	\begin{aligned}
		&(0.5-\varepsilon)\leq P(\varphi_1|e)\leq (0.5+\varepsilon),\\
		&(0.5-\varepsilon)\leq P(\varphi_{i+1}|\varphi_1,\ldots,\varphi_{i},e)\leq (0.5+\varepsilon)
	\end{aligned}
\end{align}
for every $1\leq i\leq n$, 
where $e$ represents an arbitrary random variable prior to $\varphi_1$, which can influence $\varphi_1,\ldots,\varphi_n,\ldots$. }
Note that, when $\varepsilon=0$, bits are fully random, while they can be even fully deterministic when $\varepsilon=0.5$. For brevity, throughout the rest of the paper we will write $p_-$ for $(0.5-\varepsilon)$ and $p_+$ for $(0.5+\varepsilon)$.

In the research on randomness amplification, the paper of Colbeck and Renner \cite{cr} is certainly crucial. It is also a starting point for our idea. The authors consider the bipartite scenario of the chained Bell inequality and prove that, under certain assumptions (discussed later), it is possible to amplify randomness of $\varepsilon$-SV sources, provided that $\varepsilon<\left(\sqrt{2}-1\right)^2/2\approx 0.086$. The result may be improved, as is done in \cite{ghhhpr}. There, based on the observation that extremal points of the set of probability distributions from an $\varepsilon$-SV source are certain permutations of Bernoulli distributions with parameter $(0.5-\varepsilon)$, randomness amplification is obtained for any $\varepsilon < 0.0961$. Moreover, the bound is shown to be tight, which means that under these assumptions, it is not possible to achieve randomness amplification using the chained Bell inequality above this threshold.

Gallego et al. \cite{acin} show that, given an $\varepsilon$-SV source, with any $0<\varepsilon< 0.5$, and assuming no-signaling, full randomness may be certified using quantum non-local correlations. In \cite{acin}, the Bell scenario of five-party Mermin inequality is considered, however, unlike in the protocol proposed in \cite{cr}, the hashing function used to compute the final random bit is not explicitly provided and a large number of space-like separated devices is required.

Further results were obtained in \cite{marcin}, \cite{Chung-Shi-Wu}, \cite{brghhh}, \cite{ravi}, \cite{chungNew} etc., a~wide range of protocols have been proposed, these are summarized and compared in Table I in \cite{brghhh}. The problem has been considered from different points of view and a~lot of obstacles, such as the requirement of an infinite number of devices or no tolerance for noise, have already been overcome. However, relaxing the assumption about independence between a source and a~device has not yet been widely studied, especially in the context of a finite device framework against a no-signaling adversary.  

In this paper, we relax this assumption, i.e., do not require a~source and a~device to be independent. Instead, we only limit the correlations between them by one constraint, which we call the SV-condition for boxes and specify in details later. 
We prove explicitly that the most malicious correlations (between a~source and a~device) are not allowed due to the assumption that an $\varepsilon$-SV source remains an $\varepsilon$-SV source even upon obtaining the inputs and outputs from boxes. Hence, randomness amplification is still possible. Our new method of proof allows to analyze an attack where an adversary sends to the honest parties those boxes that are particularly adapted to their measurement settings, as well as to the hashing function applied. We explain the dangers of such attacks with an explicit example in Section~\ref{sec:motivation}.

{So far, only Chung et al. 
have tried to weaken the independence assumption. In \cite{Chung-Shi-Wu} they approach the
problem in a quantum formalism, while in {\cite{chungNew}}, which was announced later than the first version of this paper, they prove (in the similar spirit) security against no-signaling adversaries, although using a larger number of devices. Our approach is different and independent from the one proposed by Chung et al. in \cite{Chung-Shi-Wu} and \cite{chungNew}. 
We believe that the results obtained within this paper give a new insight into the research on randomness amplification and, due to the clarity of assumptions, will also be significant in the more general task of obtaining secure key bits in cryptography}.

The paper is organized as follows.
In Section \ref{sec:preliminaries} we introduce some basic notations and definitions. A motivation for the paper is described in Section \ref{sec:motivation} with a~toy example of an attack strategy for the adversary.
In Section \ref{sec:theObservedBellValuefromSVcondition} we formally state the assumptions considered in the paper and discuss the results for a single no-signaling box.
Section \ref{sec:exampleTruevsObsChain} is devoted to the explicit example of the chained Bell inequality, which is interesting because it may be compared with the results of Colbeck and Renner \cite{cr}. 
Further, within Section \ref{sec:composableDistance}, we estimate a composable distance (for a~private weak source of randomness) between a fully random bit and an output bit of a box. 
In Section \ref{sec:randomnessAmplificationProtocol} we revisit the Colbeck and Renner protocol for amplification of randomness using the chained Bell inequality. 
A~general class of attacks exhibiting certain kind of symmetry and having limited correlations between the runs of the device (see Sections \ref{sseq:old:6e} and \ref{sseq:old:6f} for the detailed description of the assumptions on the attack strategy) is considered within Section \ref{sec:analysisOfProtocol}.
We prove (in a~composable way for private sources) that  under the relaxed assumption, 
against this class of attacks, the protocol allows for amplification in the parameter range $0 \leq \varepsilon<0.0132$. {Finally, in Section~\ref{sec:conclusion}, we summarize our results and raise just a~few open questions}. 


\section{Preliminaries}
\label{sec:preliminaries}

\subsection{No-signaling boxes}

In our study we use a family of probability distributions, usually called a box, denoted by $P(O|I)$, where $I$ and $O$ are random variables describing the vectors of inputs and outputs, respectively.  

To talk about randomness amplification, it is advisable to explain what is meant by the no-signaling condition. 
In the simplest case, when there are only two parties: Alice and Bob, the no-signaling assumption is that
\begingroup\makeatletter\def\f@size{9}\check@mathfonts
\begin{align}\label{eq:ns}
\begin{aligned}
\sum_{y}&P(O=(x,y)|I=(u,v))=\sum_{y}P(O=(x,y)|I=(u,v^{\prime})),
		 \\
	\sum_{x}&P(O=(x,y)|I=(u,v))=\sum_{x}P(O=(x,y)|I=(u^{\prime},v))
\end{aligned}
\end{align}
\endgroup
for every $u,u^{\prime},v,v^{\prime},x,y$, {where $u,u^{\prime}$ and $v,v^{\prime}$ denote the inputs of Alice and Bob, respectively, while $x$ and $y$ denote their outputs}.

\subsection{Bell values observed in laboratories}

Theoretically, there may exist no-signaling boxes which attain the algebraic violation of the chosen Bell inequality. However, as for now, we are able to use in laboratories only these boxes which violate the inequality up to the value obtained within the rules of quantum mechanics. This simply means that the Bell value observed in a lab may not be lower (here a larger violation is characterized by a smaller value for the Bell expression) than the value predicted by quantum mechanics.

\subsection{Bell inequalities useful for randomness amplification} 

It is well-known that quantum mechanics allows for non-local correlations between spatially separated systems. 
Occurrence of such correlations can be verified through the violation of Bell inequalities. The convex set formed by the correlations described by quantum theory is sandwiched between the sets
of classical and general no-signaling correlations. Only extremal boxes (vertices) of the no-signaling polytope are completely uncorrelated with the environment and hence provide intrinsic certified randomness. It has been recently proven in \cite{tuziemski} that non-local vertices of the no-signaling polytopes of correlations admit no quantum realization. For amplification of SV sources, Bell inequalities with the property that the optimal quantum value equals the optimal no-signaling value are required.  For such Bell inequalities (e.g. GHZ paradoxes \cite{paradox}, pseudo-telepathy games \cite{gms-pseudo-telepathy} or Bell inequalities for graph states \cite{gthb-graph-states}), or those where the quantum violation is close to algebraic (such as the chained Bell inequality \cite{braunstein_caves}), the quantum set reaches the corresponding facet of the no-signaling polytope.

In this paper we mainly focus on the chained Bell inequality, which has already been used in the research on randomness and privacy amplification (see \cite{cr}, \cite{ghhhpr} or \cite{rotem}).

\section{Motivation and a toy example}
\label{sec:motivation}

We now exemplify a possible attack that utilizes correlations between a weak source and device in the simplest scenario of boxes with binary inputs and outputs.
Even though these boxes do not constitute a resource for randomness amplification, the attack can already be described in terms of these.
	\begin{figure}[!t]
		\centering
		\includegraphics[trim=0cm 5cm 0cm 0cm, width=9.3cm]{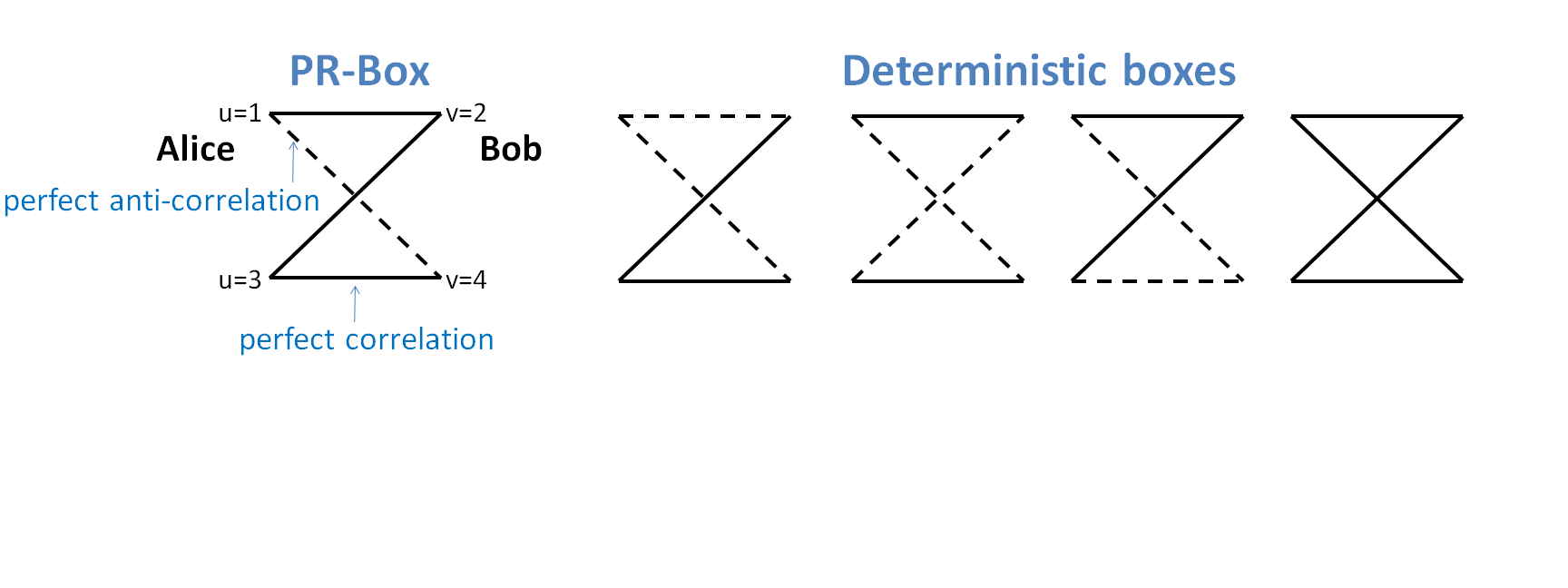}
		\caption{%
			Examples of bipartite boxes with binary inputs and outputs denoted by graphs.
			The Popescu-Rohrlich box (on the left) and local (deterministic) boxes (on the right). {The graphs should be read in the following way: solid (dashed) lines between arbitrary vertices $u$ and $v$ imply that, given the input $I=(u,v)$, the output bits are perfectly correlated (anti-correlated) with probability $1$. This means that e.g. the Popescu-Rohrlich box presented in this figure is determined by the following correlations: $P(O=(0,0)|I=(1,2))=P(O=(1,1)|I=(1,2))=0.5,\; P(O=(0,0)|I=(3,2))=P(O=(1,1)|I=(3,2))=0.5,\;P(O=(0,0)|I=(3,4))=P(O=(1,1)|I=(3,4))=0.5,\;P(O=(0,1)|I=(1,4))=P(O=(1,0)|I=(1,4))=0.5$.}
		}
		\label{fig:pr_box_}
	\end{figure}

\begin{figure*}[!t]
	\centering
	\includegraphics[trim=0cm 0.5cm 0cm 0cm, width=10cm]{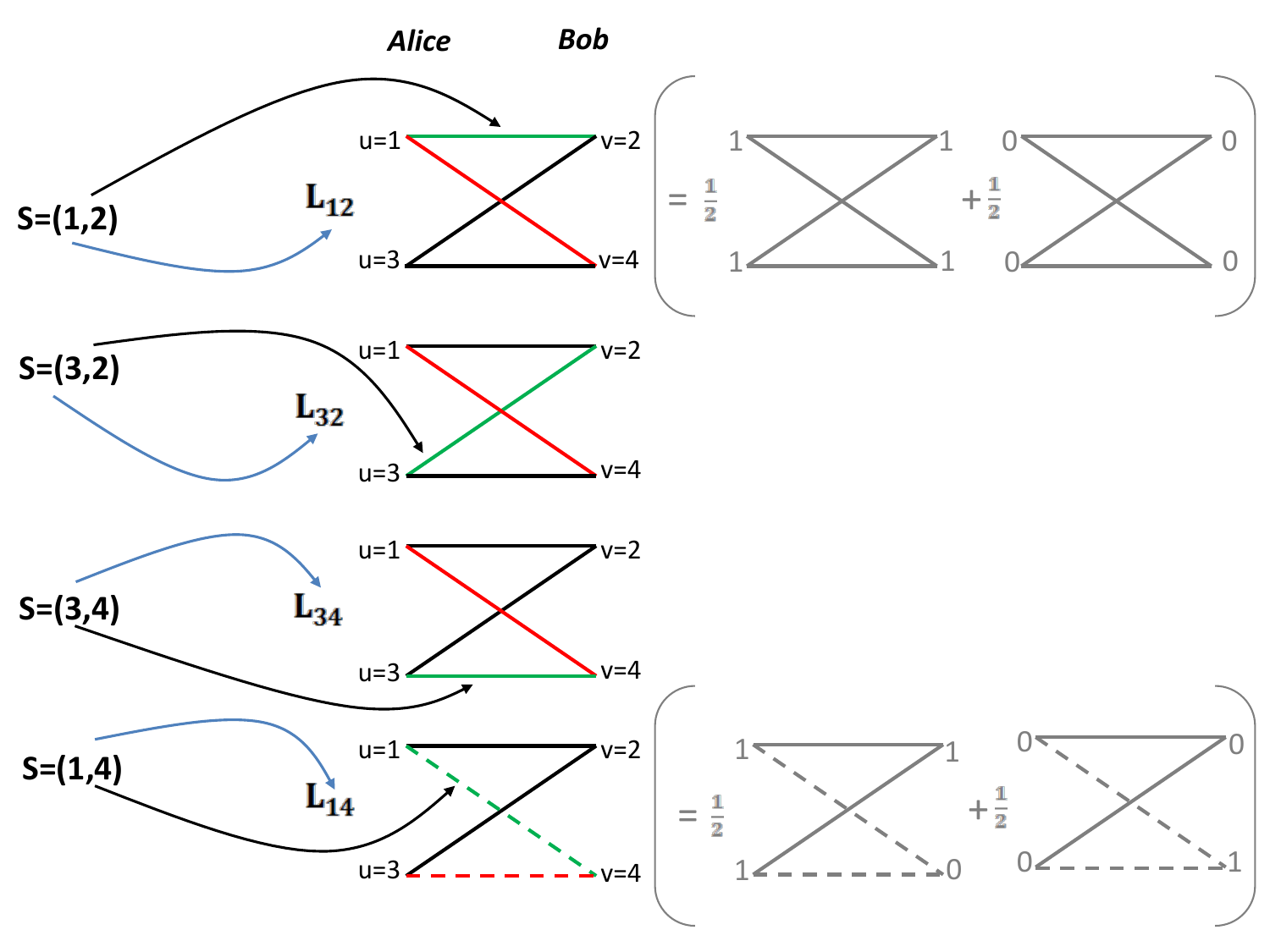}
	\caption{
		Bits from an $\varepsilon$-SV source (on the left) are perfectly correlated with local boxes supplied to honest parties (on the right). Correlations described by Eq. (\ref{correal:box}) are indicated by blue arrows. Additionally, bits from an $\varepsilon$-SV source are perfectly correlated with the inputs to boxes (see Eq. (\ref{correl:inputs})), which is indicated by black arrows. These correlations allow only for measuring green edges and hence Alice and Bob always observe an optimal Bell value. If red edges could be measured, the locality of boxes would be detected.
	}
	\label{fig:ex_new}
\end{figure*}

Imagine that Alice and Bob share a box $L$ which is a mixture of local boxes $L_{ij}$, where $i = 1,3$ labels Alice's inputs and $j = 2,4$ labels Bob's inputs: 
\begin{align}
	L=\frac{1}{4}\left(L_{12}+L_{32}+L_{34}+L_{14}\right).
\end{align}
(See Fig. \ref{fig:pr_box_} where the PR box and local deterministic boxes are presented and Fig. \ref{fig:ex_new}, where the boxes $L_{ij}$ are given explicitly). The bits from an $\varepsilon$-SV source are perfectly correlated to local boxes as
\begin{align}
\begin{aligned}
	\label{correal:box}
	P\left(L_{ij}|S=(k,l)\right)=\delta_{ik;jl} 
	=\left\{\begin{array}{ll}1,&i=k \;\&\; j=l,\\ 0,&\text{otherwise,}\end{array}\right.
\end{aligned}
\end{align}
where $S$ is the random variable describing bits from an $\varepsilon$-SV source.

In the protocols proposed so far such as \cite{cr}, \cite{ghhhpr}, it is demanded that $I$ and $S$ are perfectly correlated, i.e.
\begin{align}
\begin{aligned}
	\label{correl:inputs}
	P(I=(u,v)|S=(k,l))=\delta_{uk;vl}
	=\left\{\begin{array}{ll}1,&u=k \;\&\; v=l,\\ 0,&\text{otherwise,}\end{array}\right.
	\end{aligned}
\end{align}
which means that bits the from the $\varepsilon$-SV source are used as inputs to the box. 
All the correlations are indicated in Fig. \ref{fig:ex_new}. Now, we see that although the box $L$ is manifestly local, the honest parties do not detect it in the protocols proposed so far. Indeed, correlations (\ref{correal:box}) and (\ref{correl:inputs}) imply that input $I=(k,l)$ may only be introduced to box $L_{kl}$, adapted exactly to this input, so that $L$ mimics the action of the PR box on any input. On the other hand, if there was independence between the $\varepsilon$-SV source and the boxes, the parties would recognize that the object $L$ is local.

{%
	To conclude, this toy example clearly illustrates that perfect correlation of inputs and devices excludes any possibility of randomness amplification. To circumvent this type of attack, we introduce the SV-condition for boxes, which is the weakest assumption (thus far) that still allows for randomness amplification.
}

{In the next section we show that the SV-condition for boxes implies the following: if the Bell value observed by the honest parties (\( \delta^{\text{obs}} \)) is small, then the true Bell value (\( \delta^{\text{true}} \)) is also small. In Section \ref{sec:exampleTruevsObsChain} we apply the whole reasoning to the chained Bell inequality. Finally, in Section \ref{sec:composableDistance}, we estimate a composable distance between a bit obtained from a single box and a fully random bit. The bound is given as a function of \( \delta^{\text{obs}} \) and $n$  (the number of input pairs considered in the chained Bell inequality)}.

\section{SV-condition for no-signaling boxes and the relation between the true and the observed Bell value}
\label{sec:theObservedBellValuefromSVcondition}

\subsection{Correlations between the source and the device: boxes determined by the source}

Let $S$ denote a random variable which describes an arbitrary portion of subsequent bits from an $\varepsilon$-SV source. Recall that we write $I$ and $O$ for variables which describe the inputs and outputs of the device, respectively. Suppose that bits from an $\varepsilon$-SV source are delivered and (simultaneously) boxes, that are possibly correlated to them, are supplied. Hence, our object of study is
\begin{align}
	\label{def:box}
	P(O|I,S).
\end{align}

\begin{figure*}[!t]
	\centering
	\includegraphics[trim=0cm 0.2cm 0cm 0cm, width=10cm]{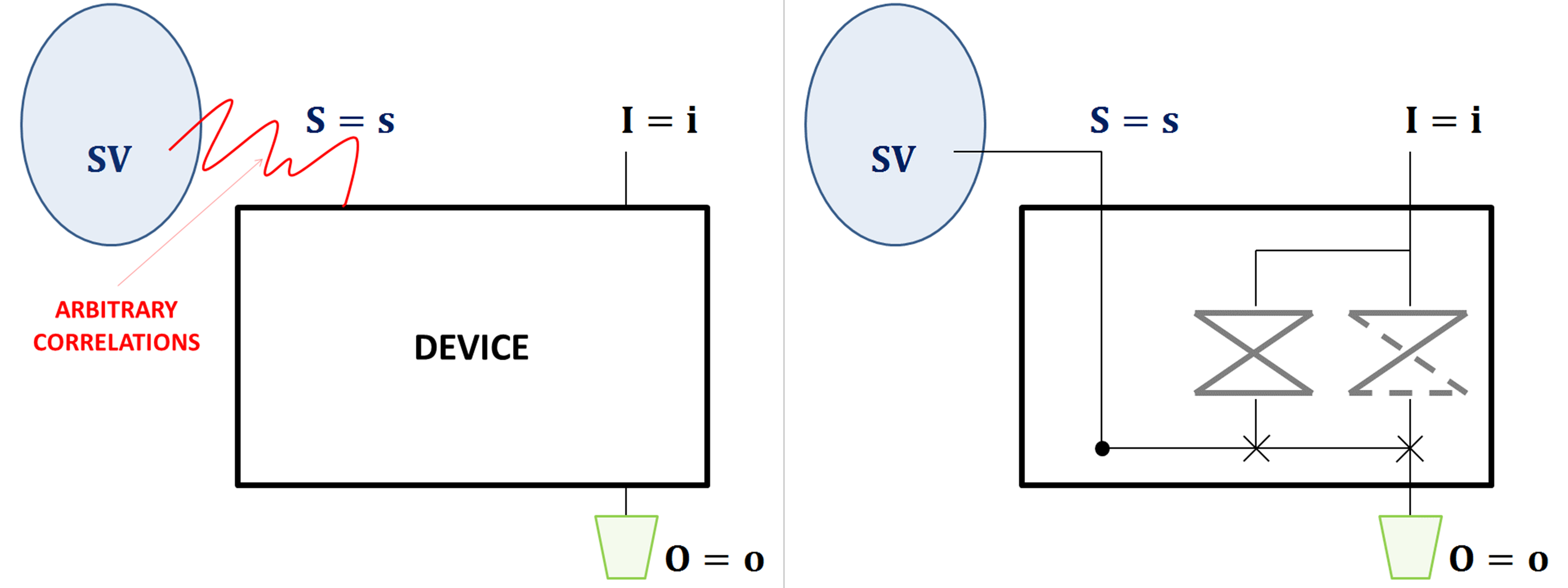}
	\caption{%
		A priori we allow arbitrary correlations between a source and a box (left). To illustrate how malicious these correlations may be, we recall the example described in Section \ref{sec:motivation} (right). Bits from an $\varepsilon$-SV source determine from which box the final output bit is taken. In general, arbitrary input bits may be introduced to the box. The illustration for other Bell inequalities may be more complicated, but the idea is the same.
	}
	\label{fig:scenario_new}
\end{figure*}
Note that $S$ determines how the device acts inside (see Fig. \ref{fig:scenario_new}). 
\begin{remark}
	\label{rem:conditional_vs_joint}
	{%
		Even if conditional distributions of the form $P(O=o|I=i,S=s)$ are equal for arbitrary $o,i,s$, joint distributions $P(O=o,I=i,S=s)$ do not have to be the same.
	}
	{%
		This is just a fact which follows from conventional and meaningful way of thinking about any devices.
	}
\end{remark}



\subsection{SV-condition for boxes}

Let us now precisely state the main assumption used in this paper, which we call the SV-condition for boxes.
{%
} 
{Let $S'$ be a variable describing a portion of bits (disjoined from $S$)  chosen from the same $\varepsilon$-SV source, from which the input $I$ to the device is taken}. Note that we do not assume any temporal ordering between $S$ and $S'$. Let $\eta_{\min}, \eta_{\max} \in (0,1)$ be some functions of $\varepsilon>0$ and $|\mathcal{I}|$ (denoting the number of measurement settings). 
{%
	Although we a priori allow for arbitrary correlations between the~source and the device, 
	there is one constraint which we impose, namely that if $S'=s'$ is input into the device with $\eta_{\min} \leq P(S=s|S'=s') \leq \eta_{\max}$, then $S$ cannot be guessed perfectly even after knowing the output $O=o$, i.e., for every realizations $o,s,s'$ 
	\begin{align}
	\begin{aligned}
		\label{eq:SVforBoxes_S}
		&\eta_{\min}\leq P(S=s|O=o,S'=s')\leq \eta_{\max} \; \text{ for }\;S, S'\\ 
		&\text{such that }\; \eta_{\min}\leq P(S=s|S'=s')\leq 	\eta_{\max}.
	\end{aligned}
	\end{align}
}

\begin{remark}
{It should be noted that any conditional probability is well defined only if the event in its condition is of non-zero probability. Therefore we can consider $P(S=s|O=o, S'=s')$ only for $o$ and $s'$ such that $P(O=o,S'=s')\neq 0$, which means that for an input $s'$ we can obtain an output $o$ with some positive probability.}
\end{remark}

\begin{remark}
	\label{rem:neglecting_e}
	{%
		The distribution remains unchanged even if conditioned upon a variable $e$, which represents some information prior to $S'$. To avoid unnecessary notation, we neglect it in the condition, since it is irrelevant in what follows.
	}
\end{remark}

Assuming condition (\ref{eq:SVforBoxes_S}), which we henceforth call the SV-condition for boxes, we certainly assume less than independence between the source and the device.
Note that the SV-condition for boxes is clearly violated in the toy example from Section \ref{sec:motivation}. Indeed, suppose that there are some testers who obtain further bits from the SV source denoted by the variable $S'$ (so that $p_{\min} \leq P(S'=s'|S=s) \leq p_{\max}$ and conversely $\zeta_{\min} \leq P(S=s|S'=s') \leq \zeta_{\max}$ for some $\zeta_{\min}, \zeta_{\max} \in (0,1)$, 
whose explicit forms are derived in Appendix \ref{appendix1}) and input them into the box. When they input $S'=s$ and observe an output that does not mimic the PR box, which we denote by $O \neq o_{PR}$, then due to the perfect correlations between $S$ and $L_{ij}$ they know that $S \neq s$, i.e., we have
\begin{align}
	P(S=s|S'=s,O\neq o_{PR})=0,
\end{align}
which violates Eq. (\ref{eq:SVforBoxes_S}). 
{Finally, note that taking $S'$ subsequently to $S$ is just the worst case scenario (since $\zeta_{\min}\leq p_{\min}\leq p_{\max}\leq \zeta_{\max}$).}

\subsection{Comparing our assumptions with the assumptions of Chung, Shi and Wu}

{Let us now describe how the SV-condition for boxes, assumed in this paper, differs from what has been assumed in other papers so far}. Firstly, note that to retain the possibility of randomness amplification, one has to necessarily make some assumptions on the correlations between the source and the device (cf. the attack in Section \ref{sec:motivation}). 
The intuition behind the possible assumptions is the following: 
no one in the world should get to know the value of the bits from the SV source better than up to $\varepsilon$ (of course without revealing the bits themselves), even if conditioned upon any possible event in the universe. 
{In particular, if we input a portion of bits from the SV source into any available device and record the outputs, then still any other portion of bits should obey the SV source condition.} 

A stronger assumption that one may consider, is that for an input to the device that is \textit{independent} of the SV source, when conditioned on the output, the source should remain an SV source (see Fig. \ref{fig:scenario_SV2} on the left). This condition is analogous to a~similar condition on min-entropy sources, which is derived from the assumption by Chung, Shi and Wu (CSW) in \cite{Chung-Shi-Wu}. Namely, CSW consider a quantum scenario, where the device $D$ and the  min-entropy source $S$ are correlated as in the cq-state $\rho_{SD}$,
\begin{equation}
	\rho_{SD} := \sum_s P(S=s) |s \rangle \langle s| \otimes \rho^{D}_s
\end{equation}
and they assume that the quantum conditional min-entropy $H_{\text{min}}(S|D)_{\rho}$ of the source conditioned on the device is greater than some constant $k$. This implies (see \cite{KRS}) that for any POVM measurement $\{\mathcal{M}_s\}$ performed by an agent on the quantum register $D$, the probability of the agent correctly guessing $S$, $P_{\text{guess}}(S|D)$, is upper bounded.  
The assumption of Chung, Shi and Wu thus implies that for any input variable $I$ independent of the source $S$, the probability $P_{\text{guess}}(S|D)$ obeys
\begin{align}\label{eq:csw}
\begin{aligned}
P_{\text{guess}}(S|D)= \sum_s P(O=s|I=i) P(S=s|O=s, I=i)\leq 2^{-k} 
\end{aligned}
\end{align}
for all $i$. 
Correlations between the source and the device are also limited by the condition similar to (\ref{eq:csw}) in the more recent paper by Chung, Shi and Wu \cite{chungNew}.

{Condition (\ref{eq:csw})} (whether in the scenario of a min-entropy source, or that of an SV source) has the drawback of effectively introducing an agent that is not correlated with the weak source. However we know that from two independent partially random sources one can extract perfect randomness in the classical world. So the operational realization of the originally mathematical condition might require the existence of an independent variable, implying the possibility of obtaining randomness right from the source and the agent's variable, if the latter's distribution was not deterministic.

{
The most orthodox approach, which is free from the above drawback is the following: since no-one in the world can choose a measurement of his/her own free will, the only way to choose it is to use some weak source. This concept is used to weaken the original assumption (saying that, conditioned on any measurement, an SV source stays the same). Namely, we imagine that an agent draws bits from the SV source and chooses measurements according to these bits (see Fig. \ref{fig:scenario_SV2} on the right).  
The new condition is clearly weaker than the original one, because it can be reduced to it (by assuming that an SV source should stay the same for any joint distribution of choice of measurements and bits from the source).}

In this paper, we consider a somewhat intermediate scenario {(see Fig. \ref{fig:scenario_SV2} in the middle)}: we assume that the agent (which we call the "tester") has a~variable which describes subsequent bits drawn from the same SV source (so that his variable will not be necessarily independent of the other portion of the SV source, used as input by the users who want to draw randomness).
However, we also assume that  the device is correlated with the tester's variable only through the users' variable, i.e. that for any $o,i,s,s'$ we have $P(O=o|I=i, S = s,S'=s')= P(O=o|I=i, S=s)$. 

\begin{figure*}[!t]
	\centering
	\includegraphics[trim=0cm 0.2cm 0cm 0cm, width=15cm]{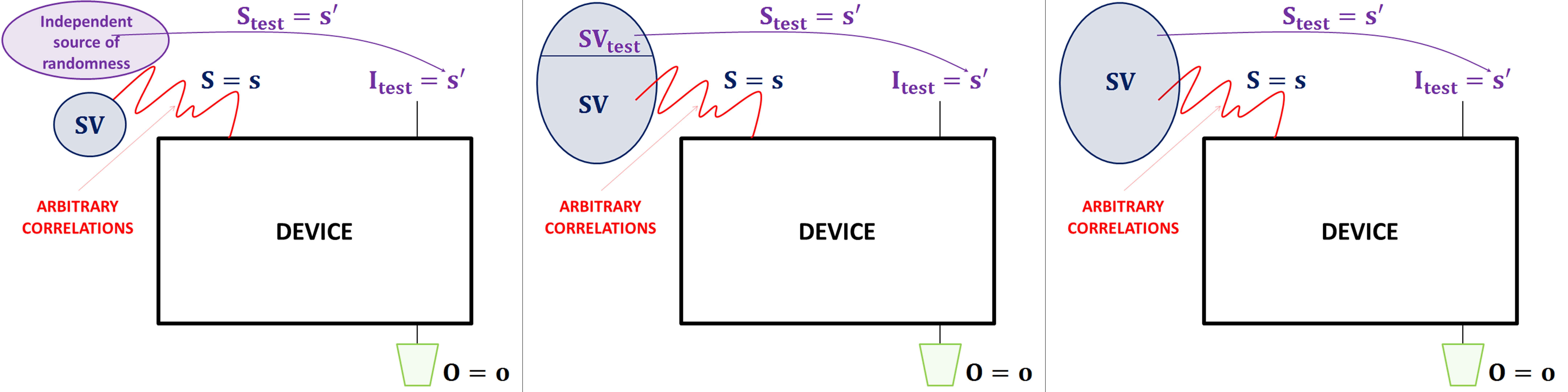}
	\caption{%
		{(Left) The $\varepsilon$-SV source represented by the variable $S$ is correlated to the device, so that $S$ determines the box. The SV-condition for boxes in Eq.(\ref{eq:SVforBoxes_S}) is verified using the additional source of randomness, which is independent of the given $\varepsilon$-SV source. The figure corresponds to the SV-analogue of the CSW condition.
		(Middle) The main part of the $\varepsilon$-SV source represented by the variable $S$ is correlated to the device. Other bits denoted by the variable $S_{\text{test}}$ from the part of the $\varepsilon$-SV source SV$_{\text{test}}$ are correlated with the device only through the variable $S$. If bits are taken from SV$_{\text{test}}$ and used as inputs to the device, one can check whether the SV-condition for boxes is  violated or not. 
		(Right) The desired scenario, in which the SV-condition for boxes is verified using the same weak source of randomness, which is used by the honest parties in the protocol (probably a more sophisticated testing procedure is required here.}
	}
	\label{fig:scenario_SV2}
\end{figure*}

This is a~clearly weaker assumption than the SV-analogue of the CSW condition, since if we take $S'$ to be independent of $S$, we obtain {the SV-analogue of the CSW condition}, while in our case this condition need not be met, and the dependence between $S'$ and $S$ may be chosen by an adversary. In other words, in the SV analogue of the CSW assumption, one requires that for some particular joint distribution (with independent $I$ and $S$), $P(S|I, O)$ is still an SV source, irrespective of the protocol, while  in our case, the latter may hold for some other distribution, this time chosen \textit{adversarially} for any given protocol. 
 
{It should be noted that our reasoning, based on weakening the independence assumption, cannot be applied to arbitrary min-entropy sources, since after gaining knowledge about some bits from a general H$_{\min}$ source, the rest of the bits need not constitute an H$_{\min}$ source any more. On the other hand, the CSW proof could still apply to block min-entropy sources (see \cite{block_min}), i.e., sources that are divided into blocks such that each block has a min-entropy at least $k > 0$, conditioned upon the value of the other blocks. The investigation of the class of min-entropy sources for which the weaker condition still allows for the possibility of randomness amplification and the applicability of the CSW proof under this condition are left as open questions.} 
 
The threshold for the range of $\varepsilon$ for which we will be able to amplify the SV source in the present paper (obtained in Theorems \ref{thm:epsilon1} and \ref{thm:epsilon2}) is weaker than the one obtained by Colbeck and Renner in \cite{cr}. 
This however is only to be expected as the scenario considered in this paper is more general than the scenario analyzed in \cite{cr}, which was based on the assumption that the source and the device are independent. While the protocols of \cite{acin}, \cite{brghhh} and \cite{ravi} achieve randomness amplification for the entire range of $\varepsilon$ and the latter two protocols also tolerate noise within a finite-device framework, they also do so under the assumption of independence between source and device and are therefore incomparable with the results in this paper.

\subsection{Scenario}

The scenario is as follows. There are: an $\varepsilon$-SV source and a device correlated to some portion of subsequent bits from the source, described by the variable $S$ (see Fig. \ref{fig:scenario_honest}). 
The honest parties draw $S=s$ from the source and use it as an input to the box, which means that $S$ and $I_{HP}$, the random variable describing the measurement settings of the honest parties, are perfectly correlated, i.e.
\begin{align}
	\label{eq:P(I|S)}
	P(I_{HP}=i|S=s)=\delta_{is}\qquad \text{for every }i,s.
\end{align} 
The honest parties then test the statistics of a box for suitable violation of a certain Bell inequality. 

	\begin{figure}[!t]
		\centering
		\includegraphics[trim=0cm 0.2cm 0cm 0cm, width=5cm]{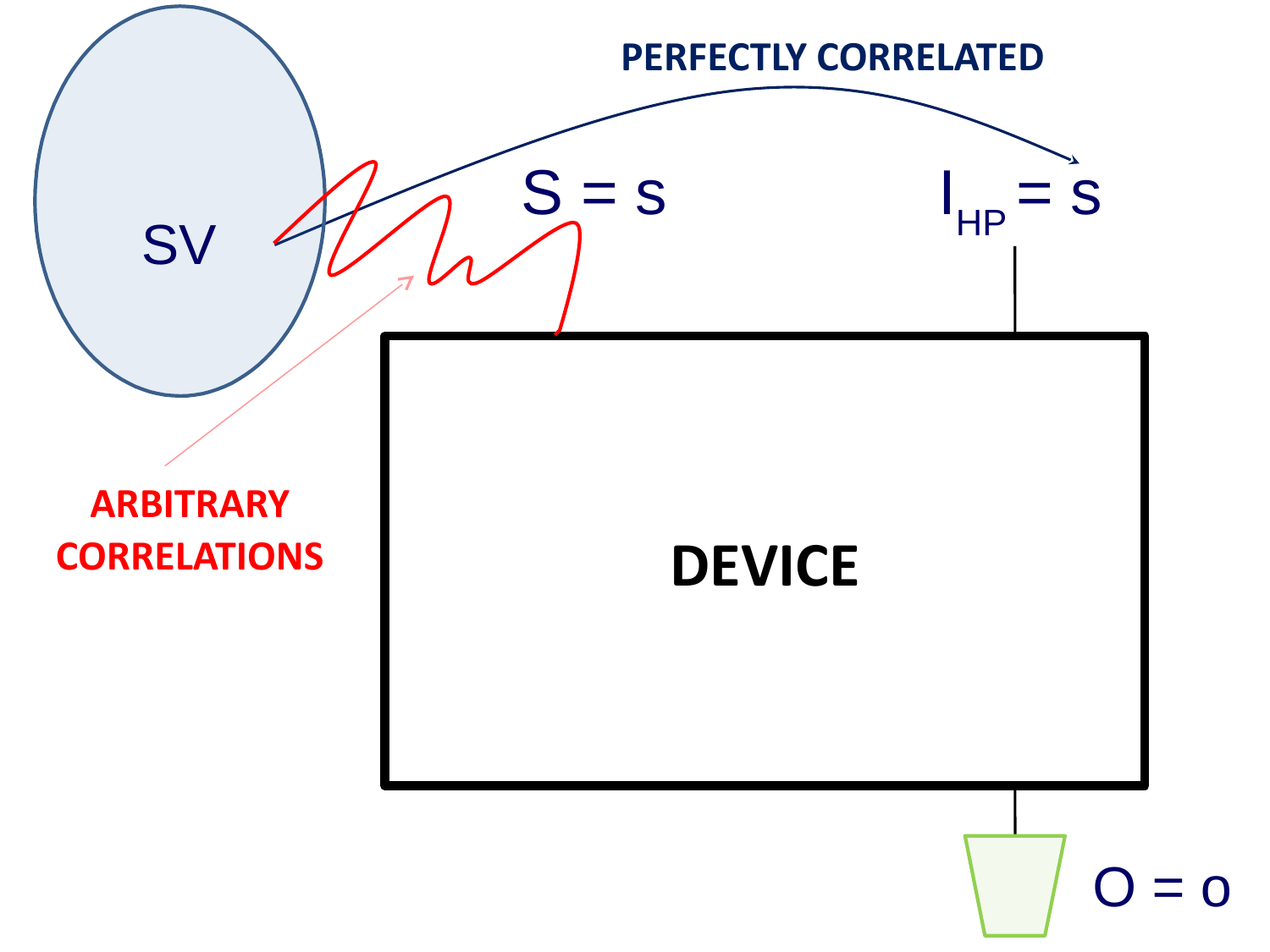}
		\caption{%
			Bits from an $\varepsilon$-SV source are used by honest parties as inputs. The correlation is given by Eq.~(\ref{eq:P(I|S)}).
		}
		\label{fig:scenario_honest}
	\end{figure}

\subsection{The true and the observed Bell value}
\label{sseq:old:3e}

In the most general form, the Bell value is given by the formula
\begin{align}
	\label{def:Bell_value_general}
	\delta=\sum_{o,i}P(O=o,I=i)B(i,o),
\end{align}
{%
	where $B$ is an indicator vector for the Bell inequality and $P$ is an arbitrary joint probability distribution. We specify it depending on the context.
}

We are particularly interested in evaluating the true Bell value, as it informs us whether the box delivers randomness or not. Let $\mathcal{I}$ denote all the settings appearing in the Bell expression. The true Bell value $\delta^{\text{true}}$ is calculated for variables $I_{\text{indep}}$, uniformly distributed ($P(I_{\text{indep}}=i)=1/|\mathcal{I}|$) and independent from $S$. It is then defined as follows:
\begin{align}
	\label{def:Bell_value_true}
	\delta^{\text{true}}
	=\frac{1}{|\mathcal{I}|}\sum_{o,i}P(O=o|I_{\text{indep}}=i)B(i,o),
\end{align}
where $|\mathcal{I}|$ is the number of measurement settings. 

Further, we define the observed Bell value, i.e. we write Eq. (\ref{def:Bell_value_general}) for $I_{HP}$, determined by Eq. (\ref{eq:P(I|S)}), and obtain
\begin{align}
	\label{def:Bell_value_obs}
	\delta^{\text{obs}}_{HP}=\sum_{o,s}P(S=s)P(O=o|I_{HP}=s,S=s)B(s,o).
\end{align}

The aim is to show that the true Bell value is small whenever the observed value is small, i.e. the ratio $\delta^{\text{obs}}_{HP}/ \delta_{SV}^{\text{true}}$ is controlled.

\subsection{Testing the SV-condition for boxes}

Honest parties test the statistics of a box using a certain Bell inequality. There is a danger that they may be cheated, as exemplified in Section \ref{sec:motivation}. The $\varepsilon$-SV source can be  correlated with the device, as illustrated in Fig. \ref{fig:scenario_SV2}.

Since the honest parties only input $I_{HP}$ which is perfectly correlated to $S$, 
\begin{align}
	P(I_{HP}=i|S=s)=\delta_{is},
\end{align}
they are themselves not able to verify whether the SV-condition for boxes (\ref{eq:SVforBoxes_S}) is violated or not. Therefore, we consider testers who have access to part of the $\varepsilon$-SV source (SV$_{\text{test}}$), described by the variable $S_{\text{test}}$, which is correlated with the device only through the variable $S$ and does not change the statistics of a box $P(O|I,S)$ (see Fig. \ref{fig:scenario_SV2}, in the middle), i.e.
\begin{align}
	\label{eq:P(I2|S)}
	p_{\min}\leq P(S_{\text{test}}=s'|S=s)\leq p_{\max}\;\text{ for every }s,s'
\end{align}
and 
\begin{align}
	\label{eq:box-subsSV}
	P(O|I,S,S_{\text{test}})=P(O|I,S).
\end{align}

{
\begin{remark}\label{remark:technical}
Note that by assuming that an $\varepsilon$-SV source should stay the same for any joint distribution of choice of measurements and bits from the source, one can simply choose an independent source of randomness as a testing part SV$_{\text{test}}$. Therefore the condition proposed in this paper, although a bit technical, leads us to the desired scenario, in which the device is tested (in terms of satisfying the SV-condition for boxes \ref{eq:SVforBoxes_S}) using the same weak source of randomness, which is used by the honest parties in  the protocol (cf. Fig. \ref{fig:scenario_SV2}).
\end{remark}
}

When honest parties take the portion of bits $S$ from the main part of source (they do not have access to SV$_{\text{test}}$), to which the device is possibly correlated, the testers may be asked to perform the measurement using their bits $S_{\text{test}}$ as input, i.e.
\begin{align}
	\label{eq:test-input}
	P(I_{\text{test}}=i'|S_{\text{test}}=s')=\delta_{i's'}.
\end{align}
The overall picture is now the following. We have two different joint distributions $P(O,I,S,S_{\text{test}})$ and\\
$P(O,I_{\text{test}},S,S_{\text{test}})$. Conditional distributions are correlated as follows:
\begin{align}
\begin{aligned}
	P(O=o|I=i,S=s,S_{\text{test}}=s')
	&\stackrel{\text{Eq.(\ref{eq:box-subsSV})}}{=}P(O=o|I=i,S=s)\\
	&\overset{\text{Remark \ref{rem:conditional_vs_joint}}}{=}P(O=o|I_{\text{test}}=i,S=s)\\
	&\stackrel{\text{Eq.(\ref{eq:box-subsSV})}}{=}P(O=o|I_{\text{test}}=i,S=s,S_{\text{test}}=s')
\end{aligned}
\end{align}
for every $o,i,i',s,s'$, where the pairs of variables $I,S$ and $I_{\text{test}},S_{\text{test}}$ are each perfectly correlated.
As shown in Appendix \ref{appendix1}, we have that Eq. (\ref{eq:P(I2|S)}) implies
\begin{align}
	\label{eq:P(S|I2}
	\zeta_{\min}\leq P(S=s|I_{\text{test}}=s') \leq \zeta_{\max},
\end{align}
where $\zeta_{\min}$ and $\zeta_{\max}$ are functions of $p_{\min}$, $p_{\max}$ and $|\mathcal{I}|$, explicitly given by Eq. (\ref{def:zeta_min_max}) in Appendix \ref{appendix1}. Due to the SV-condition for boxes (\ref{eq:SVforBoxes_S}) this gives that
\begin{align}
	\label{eq:SV_Itest}
	\zeta_{\min}\leq P(S=s|I_{\text{test}}=s',O=o) \leq \zeta_{\max}
\end{align}
for every $s$, $s'$, $o$.

	

We now introduce an intermediate value between $\delta^{\text{obs}}_{HP}$ and $\delta^{\text{true}}$:
\begin{align}
	\label{def:Bell_value_SV}
	\delta^{\text{true}}_{SV}=\sum_{o,s'}P(O=o,I_{\text{test}}=s')B(s',o),
\end{align}
where $I_{\text{test}}$ is a random variable satisfying Eq. (\ref{eq:P(I2|S)}). 
Note that, 
{%
	according to the observation in Remark \ref{rem:conditional_vs_joint}
}
, we obtain
\begin{align}
	\begin{aligned}
		\delta^{\text{true}}
		&\overset{\text{Eq. }(\ref{def:Bell_value_true})}{=}
		\frac{1}{|\mathcal{I}|}\sum_{o,i,s}P(S=s)P(O=o|{I_{\text{indep}}=i},S=s)B(i,o)\\
		&\stackrel{\text{Remark} \ref{rem:conditional_vs_joint}}{=}\frac{1}{|\mathcal{I}|}\sum_{o,i,s}\frac{P(S=s)P(O=o,I_{\text{test}}=i,S=s)B(i,o)}{P(I_{\text{test}}=i,S=s)}\\
		&=\frac{1}{|\mathcal{I}|}\sum_{o,i,s}\frac{P(O=o,I_{\text{test}}=i,S=s)B(i,o)}{P(I_{\text{test}}=i|S=s)}
	\end{aligned}
\end{align}
and hence, according to Eq. (\ref{eq:P(I2|S)}) and the definition of $\delta^{\text{true}}_{SV}$ in Eq.(\ref{def:Bell_value_SV}), we have
\begin{align}
	\label{eq:true_vs_trueSV}
	{\frac{1}{p_{\max}|\mathcal{I}|}\delta_{SV}^{\text{true}}\leq \delta^{\text{true}}\leq \frac{1}{p_{\min}|\mathcal{I}|}\delta_{SV}^{\text{true}}.}
\end{align}

\subsection{Results and proofs}

At this point, let us explicitly restate all the assumptions used in the paper for clarity:
\begin{enumerate}
	\item There are spatially separated honest parties who share a no-signaling box, i.e., one constrained by conditions Eq.(\ref{eq:ns}).
	\item Correlations between the source and the device are only limited by the SV-condition for boxes (see Eq. (\ref{eq:SVforBoxes_S})). The device is correlated to the main part of the source from which honest parties draw their bits represented by variable $S$ (see Eq. (\ref{eq:P(I|S)})). 
	\item There exists another part of the source, called SV$_{\text{test}}$, which may only be used (by testers) to verify whether the SV-condition for boxes is violated. $S_{\text{test}}$ drawn from SV$_{\text{test}}$ is only correlated with the device through the variable $S$ and does not change the statistics of the box as given in Eq.(\ref{eq:box-subsSV}) {(cf. Fig. \ref{fig:scenario_SV2} and Remark \ref{remark:technical})}. 
\end{enumerate}
The main result of this Section is the following.
\begin{theorem}
	\label{thm:ratio}
	Under assumptions 1-3 we obtain
	\begin{align}
		\label{eq:ratio}
		\frac{\delta^{\text{obs}}_{HP}}{\delta^{\text{true}}}\geq |\mathcal{I}|\frac{p_{\min}\zeta_{\min}}{p_{\max}}.
	\end{align}
\end{theorem}

\begin{proof}
	Note that Eqs. (\ref{eq:SV_Itest}) and (\ref{eq:P(I2|S)}), as well as Remark \ref{rem:conditional_vs_joint}, imply that
	\begin{align}
		\begin{aligned}
			\delta^{\text{obs}}_{HP}
			&\overset{\text{Eq. }(\ref{def:Bell_value_obs})}{=}
			\sum_{o,s}P(S=s)P(O=o|{I_{HP}=s},S=s)B(s,o)\\
			&\stackrel{\text{Remark} \ref{rem:conditional_vs_joint}}{=}\sum_{o,s}P(S=s)P(O=o|{I_{\text{test}}=s},S=s)B(s,o)\\
			&=\sum_{o,s}P(S=s)\frac{P(O=o,S=s|I_{\text{test}}=s)}{P(S=s|I_{\text{test}}=s)}B(s,o)\\
			&=\sum_{o,s}\frac{P(S=s)P(I_{\text{test}}=s)}{P(S=s,I_{\text{test}}=s)}P(S=s|O=o,I_{\text{test}}=s)\\
			&\qquad P(O=o|I_{\text{test}}=s)B(s,o)\\
			&\overset{\text{Eq. }(\ref{eq:SV_Itest})}{\geq} 
			\zeta_{\min}\sum_{o,s}\frac{P(O=o,{I}_{\text{test}}=s)}{P({I}_{\text{test}}=s|S=s)}B(s,o)\\
			&\overset{\text{Eq. }(\ref{eq:P(I2|S)}), \text{Eq.}(\ref{eq:test-input})}{\geq} 
			\frac{\zeta_{\min}}{p_{\max}}\sum_{s,o}P({O}=o,{I}_{\text{test}}=s)B(s,o)
			\overset{\text{Eq. }(\ref{def:Bell_value_SV})}{=}
			\frac{\zeta_{\min}}{p_{\max}}\delta^{\text{true}}_{SV}.
		\end{aligned}
	\end{align}
	Referring to Eq. (\ref{eq:true_vs_trueSV}), we obtain
	\begin{align}
		\begin{aligned}
			\delta^{\text{obs}}_{HP}
			\geq |\mathcal{I}|\frac{p_{\min}\zeta_{\min}}{p_{\max}}\delta^{\text{true}},
		\end{aligned}
	\end{align}
	which completes the proof.
\end{proof}

\begin{remark}
	\label{th:single_box_general}
	Suppose that assumptions 1-3 are satisfied. 
	Note that any Bell value (of non-local boxes) observed in a lab can be predicted by the rules of quantum mechanics and hence we set
	\begin{align}
		\delta_{HP}^{\text{obs}}=\delta_Q.
	\end{align}
	Further, due to Theorem \ref{thm:ratio}, we obtain
	\begin{align}
		\label{eq:single_box_crucial}
		\delta^{\text{true}}\leq \delta_Q\frac{ p_{\max}}{|\mathcal{I}|p_{\min}\zeta_{\min}},
	\end{align}
	where $\zeta_{\min}$, $p_{min}$ and $p_{\max }$ depend on both $|\mathcal{I}|$ and $\varepsilon$. 	
	The above inequality allows to set an upper bound for $\varepsilon$ (as $|\mathcal{I}|\to \infty$), as illustrated in the example of the chained Bell inequality below.
\end{remark}

\section{Example - the true versus the observed Bell value of the chained inequality}\label{sec:chain}
\label{sec:exampleTruevsObsChain}

\subsection{The chained Bell inequality}\label{subsec:chain}

The chained Bell inequality considers the bipartite scenario of two spatially separated parties Alice and Bob. 
Let $n\in \mathbb{Z}_{+}$ be an arbitrary positive even integer. 
Let the sets $U_A := \{1,3,\ldots,{n-1}\}$ and $U_B := \{2,4,\ldots,n\}$ correspond to the measurement settings chosen by Alice and Bob, respectively. 
Each measurement pair $(u,v)$, where $u\in U_A$, $v\in U_B$, results in a~binary outcome $x\in\{0,1\}$ for Alice, and $y\in\{0,1\}$ for Bob. The chained Bell inequality is then written as \cite{braunstein_caves}
\begin{align}
	\label{bell}
	\begin{aligned}
		\frac{1}{n}\Bigg(\sum_{u,v:|u-v|=1}P(O=(x,y)|I=(u,v))[x\oplus y=1]
		+P(O=(x,y)|I=(1,n)) [x\oplus y=0]\Bigg)
		\geq \frac{1}{n},
	\end{aligned}
\end{align}
where $\oplus$ denotes addition modulo $2$ and $[B]$ denotes the Iverson bracket taking value $1$ when $B$ is true and $0$ otherwise.

\begin{remark}
	\label{edges_in_chain}
	Note that out of the $n^2/4$ possible measurement pairs, only $n$ neighbouring pairs, forming a~chain, are considered in the inequality. 
\end{remark}

For clarity, we further label the pairs of inputs by the number of the edge in the chain (see Remark $\ref{edges_in_chain}$), i.e., instead of a pair $(u,v)$, where $u\in U_A$, $v\in U_B$ and $|u-v|=1$, we set $i := \min\{u,v\}$. Similarly, the remaining pair in a chain $(1,n)$ is denoted by $n$. 
Note that the true Bell value for an~arbitrary box~$P$ is then given by 
\begin{align}
	\begin{aligned}
		\delta^{\text{true}}(P)
		=\frac{1}{n}\Bigg(\sum_{i\neq n}P(O=(x,y)|I=i)[x\oplus y=1]
		+P(O=(x,y)|I=n)[x\oplus y=0]\Bigg),
	\end{aligned}
\end{align}
while the observed value is of the form
\begin{align}
\begin{aligned}
	\delta^{\text{obs}}_{AB}(P)
	&=\sum_{s\neq n}P(S=s)P(O=(x,y)|I=s,S=s)[x\oplus y=1]\\
	&\quad\;+P(S=n)P(O=(x,y)|I=n,S=n)[x\oplus y=0].
	\end{aligned}
\end{align}

We recall that results observed in a lab are not better than the values predicted by the rules of quantum mechanics. Quantum mechanics violates $(\ref{bell})$ and provides a value of 
\begin{align}\label{def:delta_Q}
	\begin{aligned}
		\delta_Q := \sin^2(\pi/2n),
	\end{aligned}
\end{align}
which tends to $0$, as $n\to \infty$, with a rate of convergence $1/n^2$. 
This optimal quantum value is obtained by measuring on the maximally entangled state $|\phi^+\rangle=\frac{1}{\sqrt{2}}(|00\rangle+ |11\rangle)$ with the measurement settings defined by the bases $\{|\alpha\rangle,|\alpha+\pi\rangle\}$, $\alpha\in\frac{\pi}{n}\{0,2,\ldots,n-2\}$, for Alice and $\{|\beta\rangle,|\beta+\pi\rangle\}$, $\beta\in\frac{\pi}{n}\{1,3,\ldots,n-1\}$, for Bob, where $|\cdot\rangle=\cos(\cdot/2)|0\rangle+\sin(\cdot/2)|1\rangle$. 

\subsection{Value of chained Bell inequalities on boxes}

While testing the chained Bell inequality, we do not distinguish between boxes with the same probability distributions for neighboring pairs of settings. Hence, we consider only two types of extremal boxes: ideal or "bad". Any other box may be represented as a~mixture of these boxes, due to the characterization of the extremal boxes for this scenario in \cite{jm}. 

We call boxes ideal ($P_{\text{ideal}}$) if they violate the chained Bell inequality (\ref{bell}) maximally and give perfectly random bits (boxes $P_{\text{ideal}}$ play for the chained Bell inequality the same role as PR-boxes play for the CHSH inequality). With respect to the probability distributions significant for the chained Bell expression, there is exactly one box violating $(\ref{bell})$ to $0$ (compare with Remark \ref{edges_in_chain}). Precisely, this is the no-signaling box with structure of perfect correlations for the $n-1$ neighboring pairs in the sum and a perfect anti-correlation for the remaining pair $n$ (see \cite{jm} for details).
Then,
\begin{align}
	\label{eq:P_ideal}
	\begin{aligned}
		\delta^{\text{true}}(P_{\text{ideal}})=&\frac{1}{n}\Bigg(\sum_{i\neq n}P_{\text{ideal}}(O=(x,y)| I=i)[x\oplus y=1]\\
		&+P_{\text{ideal}}(O=(x,y)|I=n)[x\oplus y=0]\Bigg)=0.
	\end{aligned}
\end{align}

In classical theory, there are no ideal boxes. The notion $P_{\text{bad}}$ is used for these extremal (local deterministic) boxes whose Bell value is at least $1/n$, which means that there is at least one contradiction with probability distributions of ideal boxes (for neighboring pairs of settings). 
Apart from purely classical boxes there are also other bad boxes which do not violate the chained Bell inequality (\ref{bell}) (some of them even give randomness, but are inappropriate for the chosen inequality (\ref{bell})). 
Convex combinations of boxes $P_{\text{bad}}$ are denoted by $P_{\text{BAD}}$. By convexity,
\begin{align}
	\label{eq:P_BAD}
	\delta^{\text{true}}(P_{\text{BAD}})\geq 1/n.
\end{align}

\begin{remark}
	\label{rem:Lambda_P_ideal_P_bad}
	Any box $P$ is a mixture of boxes which attain an optimal Bell value $0$ and boxes which do not violate the chained Bell inequality 
	\begin{align}
		\label{eq:P=Lambda_P_bad}
		\begin{aligned}
			P=(1-\Lambda_P)P_{\text{ideal}}+\Lambda_P P_{\text{BAD}},\quad \Lambda_P\in[0,1].
		\end{aligned}
	\end{align} 
\end{remark}

\begin{corollary}
	\label{corol:delta_true_Lambda/n}
	The true Bell value for an arbitrary box $P$ is estimated as follows:
	\begin{align}
		\label{eq:delta_true>Lambda/n}
		\delta^{\text{true}}(P)\geq \Lambda_P / n,
	\end{align}
	where $\Lambda_P$ is defined by Eq. (\ref{eq:P=Lambda_P_bad}).
\end{corollary}
\begin{proof}
	Note that, according to Remark \ref{rem:Lambda_P_ideal_P_bad}, we obtain
	\begin{align}
	\begin{aligned}
		\delta^{\text{true}}(P)
		&\overset{\text{Eq. }(\ref{eq:P=Lambda_P_bad})}{=}
		\delta^{\text{true}}\left((1-\Lambda_P)P_{\text{ideal}}+\Lambda_P P_{\text{BAD}}\right)\\
		&=(1-\Lambda_P)\delta^{\text{true}}\left(P_{\text{ideal}}\right)
		+\Lambda_P \delta^{\text{true}}\left(P_{\text{BAD}}\right)\\
		&\overset{\text{Eq. }(\ref{eq:P_ideal})}{=}
		\Lambda_P \delta^{\text{true}}\left(P_{\text{BAD}}\right)
		\overset{\text{Eq. }(\ref{eq:P_BAD})}{\geq} 
		\Lambda_P / n.
	\end{aligned}
	\end{align}
\end{proof}

At this point we explicitly state values of \( p_{\text{min}} \text{ , } p_{\text{max}} \text{ and } \zeta_{\min} \) to be
\begin{align}
\label{def:p_min/max}
\begin{aligned}
p_{\text{min}} := \frac{p_-^{2r}}{np_+^{2r}},
\;
p_{\text{max}} := \frac{p_+^{2r}}{p_+^{2r}+(n-1)p_-^{2r}},
\;
\zeta_{\min}=\frac{p_{\min}^2}{np_{\max}^2}
\end{aligned}
\end{align}
for $r = \log(n/2)$.
The estimates come from \cite{cr} and Appendix \ref{appendix1}. Even more accurate estimates are given in \cite{ghhhpr}.

\section{Composable distance in terms of the chained Bell value}
\label{sec:composableDistance}

Let $I=i$, $i\in\{1,\ldots,n\}$, be any chosen input to a box $P$. 
To measure the distance between 
{an output bit of the box $P$ (for Alice) and a fully random bit}, we introduce the following quantity:
{%
	\begin{align}
		\label{def:d}
		d(P)=\max_i\{d_i(P)\},
	\end{align}
}
where
\begin{align}
	\label{def:di}
	d_i(P)=\left|p_i^{(P)}(0)-1/2\right|+\left|p_i^{(P)}(1)-1/2\right|,
\end{align}
and
{%
	\begin{align}\label{def:d_additional}
		\begin{aligned}
			&p_i^{(P)}(x)=\sum_{y\in\{0,1\}}P\left(O=(x,y)|I=i\right) [x\oplus y=0],\\
			&p_n^{(P)}(x)=\sum_{y\in\{0,1\}}P\left(O=(x,y)|I=n\right)[x\oplus y=1]
		\end{aligned}
	\end{align}
	for $i\in\{1,\ldots,n-1\}$, $x\in \{0,1\}$. 
}
Note that for boxes $P_{\text{ideal}}$, which generate randomness, we have $p_i^{P_{\text{ideal}}}(0)=p_i^{P_{\text{ideal}}}(1)=1/2$ for every~$i$. 
Due to Eq. (\ref{eq:P=Lambda_P_bad}), we further obtain
\begin{align}
	\label{eq:p_P}
	p_i^{(P)}(x)=\Lambda_P p_i^{\text{BAD}}(x)+(1-\Lambda_P)\frac{1}{2},
\end{align}
where $p_i^{\text{BAD}}$ is generated by boxes $P_{\text{BAD}}$ and, in the worst case, it is some deterministic function.

\begin{theorem}
	\label{prop:d_vs_delta}
	Let $d(P)$ be defined by Eq. (\ref{def:d}) for every box $P$ of the form (\ref{eq:P=Lambda_P_bad}). Then
	\begin{align}
	\label{eq:d<Lambda<delta}
		d(P) \leq \Lambda_P\leq n\delta^{\text{true}}(P).
	\end{align}
\end{theorem}

\begin{proof}
	Let us bound the distance $d$ from above. Following Eqs. (\ref{def:d}) and (\ref{eq:p_P}), we obtain
	\begin{align}
		d(P)
		=\Lambda_P{\max_i}\{|p_i^{\text{BAD}}(0)-1/2|+|p_i^{\text{BAD}}(1)-1/2|\}
		\leq \Lambda_P.
	\end{align}

	Note that, due to Eq. (\ref{eq:delta_true>Lambda/n}) of Corollary \ref{corol:delta_true_Lambda/n}, we obtain that
	\begin{align}
		d(P) \leq \Lambda_P\leq n\delta^{\text{true}}(P),
	\end{align}
	which completes the proof and indicates that, 
	{whenever the true Bell value is small, for an arbitrary box $P$, the distribution of an output bit obtained from this box is close to uniform}.
\end{proof}

{So far only a bipartite scenario has been discussed. However, in order to prove security (in a composable way), we have to consider a third party, i.e. an eavesdropper Eve with her input \(w\) and output \(z\). We therefore introduce a tripartite box of the form }
\begin{align}
	P(x,y,z|u,v,w).
\end{align}
{The box satisfies the no-signaling constraints between the honest parties and Eve, i.e.
\begin{align}
P(z|u,v,w)=P(z|w),\; P(x,y|u,v,w)=P(x,y|u,v)
\end{align}
and
\begin{align}
\begin{aligned}
&P(x|u,v,z,w) = P(x|u,z,w),\\ &P(y|u,v,z,w) = P(y|v,z,w).
\end{aligned}
\end{align}
It is then easy to see that}
\begin{align}
	P(x,y|u,v) = \sum_z P(z|w) P(x,y|z,u,v,w).
\end{align}
{Following the original protocol of Colbeck and Renner \cite{cr}, the final random bit is just the output of Alice, so it is enough to consider }
\begin{align}
	\sum_z P(z|w) P(x|z,u,w) = P(x|u).
\end{align}
For the given boxes we can calculate their Bell values. {Finally, we obtain}
\begin{align}
	\label{eq:deltaZW}
	\sum_z P(z|w) \delta^{\text{true}}_{|z,w} = \delta^{\text{true}},
\end{align}
{which, together with Theorem \ref{prop:d_vs_delta}, allows us to estimate the composable distance}. 
{
\begin{definition}
According to \cite{brghhh}, the composable distance between fully random bits and bits $x$ which are derived from a box $P$ (using a private source of weak randomness to generate inputs) is given by
\begin{equation}\label{def:d_c}
		d_\mathrm{c}{(P)} = \sum_x \max_w \sum_z P(z|w) \left|P(x|z, w)-\frac{1}{|X|}\right|,
	\end{equation}
	where \(w\) and \(z\) are Eve input and output, respectively. 
\end{definition}
}

\begin{proposition} 
	\label{prop:distance}
	The composable distance between a fully random bit and a bit obtained as an outcome {of a box \(P\), whose Bell value is described in terms of Eq. (\ref{eq:deltaZW}), is not bigger than} \(2n \delta^{\mathrm{true}}(P) \). 
\end{proposition}
\begin{proof}
	Since Alice output \(x\) is binary, 
	{then the composable distance $d_c$, defined in Eq. \ref{def:d_c}, reads as follows}
	\begin{equation}
		\begin{split}	
			d_\mathrm{c}{(P)} &= \sum_x \max_w \sum_z P(z|w) \left|P(x|z, w)-\frac{1}{2}\right|
			\leq 2\max_w \sum_z P(z|w) \sum_x \left|P(x|z, w)-\frac{1}{2}\right|\\
			&\overset{\text{Eqs. (\ref{def:d})-(\ref{def:d_additional})}}{\leq} 2\max_w \sum_z P(z|w) d\left(P_{|z,w}\right)
			\stackrel{\text{Eq.}\ref{eq:d<Lambda<delta}}{\leq} 2n \max_w \sum_z P(z|w) \delta^{\text{true}}_{|z,w}(P)
			\overset{\text{Eq. (\ref{eq:deltaZW})}}{\leq} 2n\delta^{\text{true}}(P).
		\end{split}
	\end{equation}
\end{proof}

\begin{corollary}\label{corol}
{According to Proposition \ref{prop:distance} and Remark \ref{th:single_box_general}, we obtain
\begin{align}\label{eq:dc<deltaQ}
d_\mathrm{c}{(P)}\leq 2\delta_Q\frac{p_{\max}}{p_{\min}\zeta_{\min}},
\end{align}
where $d_\mathrm{c}$ is a composable distance defined in Eq. (\ref{def:d_c}). Further, the bound on $d_\mathrm{c}{(P)}$ is tending to $0$ (meaning that full randomness of an output bit is guaranteed), as $n\to\infty$, for any 
\begin{align}
\varepsilon<\frac{2^{1/12}-1}{2\left(2^{1/12}+1\right)} \approx 0.0144,
\end{align}
which is proven in Appendix \ref{appendix5}.}
\end{corollary}

\section{Randomness amplification protocol based on the chained Bell inequality}
\label{sec:randomnessAmplificationProtocol}

The protocol is given in Figure \ref{protocol}.
\begin{figure*}[!t]
\centering
\begin{tabular}{|p{16.2cm}|} 
	\hline 
	\begin{center}
 	   \textbf{Protocol}
	\end{center}

	\begin{enumerate}
		\item[1.] The honest parties Alice and Bob choose their measurement settings $u_i\in U_A$, $v_i\in U_B$ for each of the runs $i = 1,\dots, M$ where the input sets are of size $|U_A| = |U_B| = n/2$ (see Section \ref{subsec:chain} for the precise definitions of $U_A$, $U_B$). To do so, in each run {any of them uses $r=\log(n/2)$} bits from an $\varepsilon$-SV source. Simultaneously, a sequence of $M$ boxes is supplied. 
		\item[2.]
		They check that the cardinality $|\mathcal{S}|$ of the set $\mathcal{S}$ defined as
		\begin{align}
		\begin{aligned}
			\mathcal{S} := \big\{i\in\{1,\ldots,M\}:|u_i-v_i|=1\; \vee\; (u_i,v_i)=(1,n)\big\}
		\end{aligned}
		\end{align}
		satisfies $|\mathcal{S}|\in \left[2M/n,6M/n \right]$. If not, they set the output to $R=\text{Fail}$ and abort the protocol.
		\item[3.] They verify that $x_i=y_i$ for every $i\in \mathcal{S}$ and $(u_i,v_i)\neq (1,n)$ or that $x_i\neq y_i$ for $i\in \mathcal{S}$, $(u_i,v_i)=(1,n)$. If any one of these conditions is not satisfied, they set $R=\text{Fail}$ and abort.
		\item[4.] They use further ${\log|S|}$ bits from the $\varepsilon$-SV source to choose $f\in \mathcal{S}$ which indicates the position of the box, from which an output bit $x_f$ is recorded. The protocol outputs $R=x_f$.
	\end{enumerate} 
	\\
	
	\hline
\end{tabular}
\caption{Randomness amplification protocol based on the chained Bell inequality.}
\label{protocol}
\end{figure*}
\vspace{8mm}

\begin{remark}
	\label{remark:m2}
	In Step 1 of the protocol, we require $|\mathcal{S}|\in[2M/n,6M/n]$, since the probability of uniformly choosing neighboring measurement settings is exactly $P(i\in \mathcal{S})=4/n$, for every $i\in\{1,\ldots,M\}$.
\end{remark}
\begin{remark}
	\label{remark:m}
	In the proof we set $M := (n/2)^{2.99}$ and take $n$ such that $\log n$ and $\log M/n$ are integers. We have that $(2M)/n=(n/2)^{1.99}$ and $(6M)/n=3(n/2)^{1.99}$ and the number of boxes labeled by $i\in \mathcal{S}$ is slightly smaller than $(n/2)^2$ (for large $n$). This ensures that the protocol does not abort when run with the optimal quantum strategy while it does abort when run with classical boxes.
\end{remark}

\section{Analysis of the randomness amplification protocol}
\label{sec:analysisOfProtocol}

\subsection{Parameters}

The parameters of the general problem are denoted by $m$, $n$ and $a$. Here $m$ is the number of boxes (runs) in the protocol ($m=|\mathcal{S}|$ in the above protocol based on the chained Bell inequality), $n$ is the number of input pairs that enter the inequality and $a$ is the probability that in any run, a~local box attempting to mimic an ideal box is \textit{not} detected by the measurement.

\subsection{Attacks on the protocol due to the lack of independence}

Consider that an adversary prepares a sequence of boxes of length $m$, and the honest parties obtain bits from the source to input as measurement settings in the runs $i = 1, \dots, m$. In the previously considered scenario in \cite{cr}, the assumption of independence between the source and the device implies that the observation by the honest parties of the ideal sequence of measurement outcomes (i.e., compatible with the optimal violation) guarantees that the true Bell value of the devices used in the protocol is also optimal. Moreover, the distribution of the further bits drawn to choose $f \in \mathcal{S}$ (the position of the box from which the final output bit is drawn) is also independent of the device. Therefore, when the tests in the protocol are passed, the boxes used must be optimal (i.e., as $n \rightarrow \infty$, we have that $\delta^{true} \rightarrow 0$ faster than $1/n$), and perfect randomness may be obtained from the output. 

The relaxation of the independence assumption means that the sequence of boxes supplied by the adversary may be correlated with the bits that the honest parties use in the protocol . This implies that for any given sequence of inputs and corresponding observed outputs $(I = i, O = o)$, there is a class of box sequences that is compatible with this $(i, o)$. We denote such a class in what follows as a "cloud" of box sequences. Moreover, the bit string corresponding to position $f$ is drawn from the same SV source, which means that the SV-condition for boxes in Eq.(\ref{eq:SVforBoxes_S}) applies to it. We will therefore consider attacks limited by the SV-condition as in the following remarks.
\begin{remark}\label{rem1}
	Correlations between measurement settings from the source and boxes are the same as in Sections \ref{sec:theObservedBellValuefromSVcondition} and \ref{sec:exampleTruevsObsChain}, so only the SV-condition for boxes (\ref{eq:SVforBoxes_S}) limits them.
\end{remark}
\begin{remark}\label{rem2}
	We allow attacks in which correlations between sequences of $|\mathcal{S}|$ boxes and the number $f$ are only limited as in Eq.(\ref{eq:SV_cloud}) which follows from the SV- condition for boxes.
\end{remark}

\subsection{{The considered class of attacks and their symmetries}}
\label{sseq:old:6e}

{Within this paper we explore the attacks which consist of box sequences made of extremal boxes for each run. By extremal box we mean either an ideal or a bad one (see Section \ref{sec:chain} for the accurate definitions).} 
{To explain what kind of symmetries occur in the considered class of attacks and to define this class properly, we need to introduce the following notation.} 
We say that a sequence of extremal boxes is of type $j$ if it contains exactly $j$ bad boxes. Let $P_j$ denote the probability of the class of box sequences of type $j$. Obviously,
\begin{equation}
	\label{eq:sum_P_j}
	\sum_{j=1}^mP_j=1.
\end{equation}
{We set the probability that the adversary supplies the box sequence consisting of only ideal boxes to be zero, i.e., $P_0 = 0$, these boxes generate perfect random output over all runs so that using such boxes does not give any advantage to the adversary.} 
Note that within a sequence of $m$ boxes, $j$ bad boxes may be arranged in 
$\binom{m}{j}$ different ways (see Fig. \ref{fig:arrangement}) 

\begin{figure}[!t]
	\centering
		\includegraphics[trim=0cm 4cm 0cm 1cm, width=8cm]{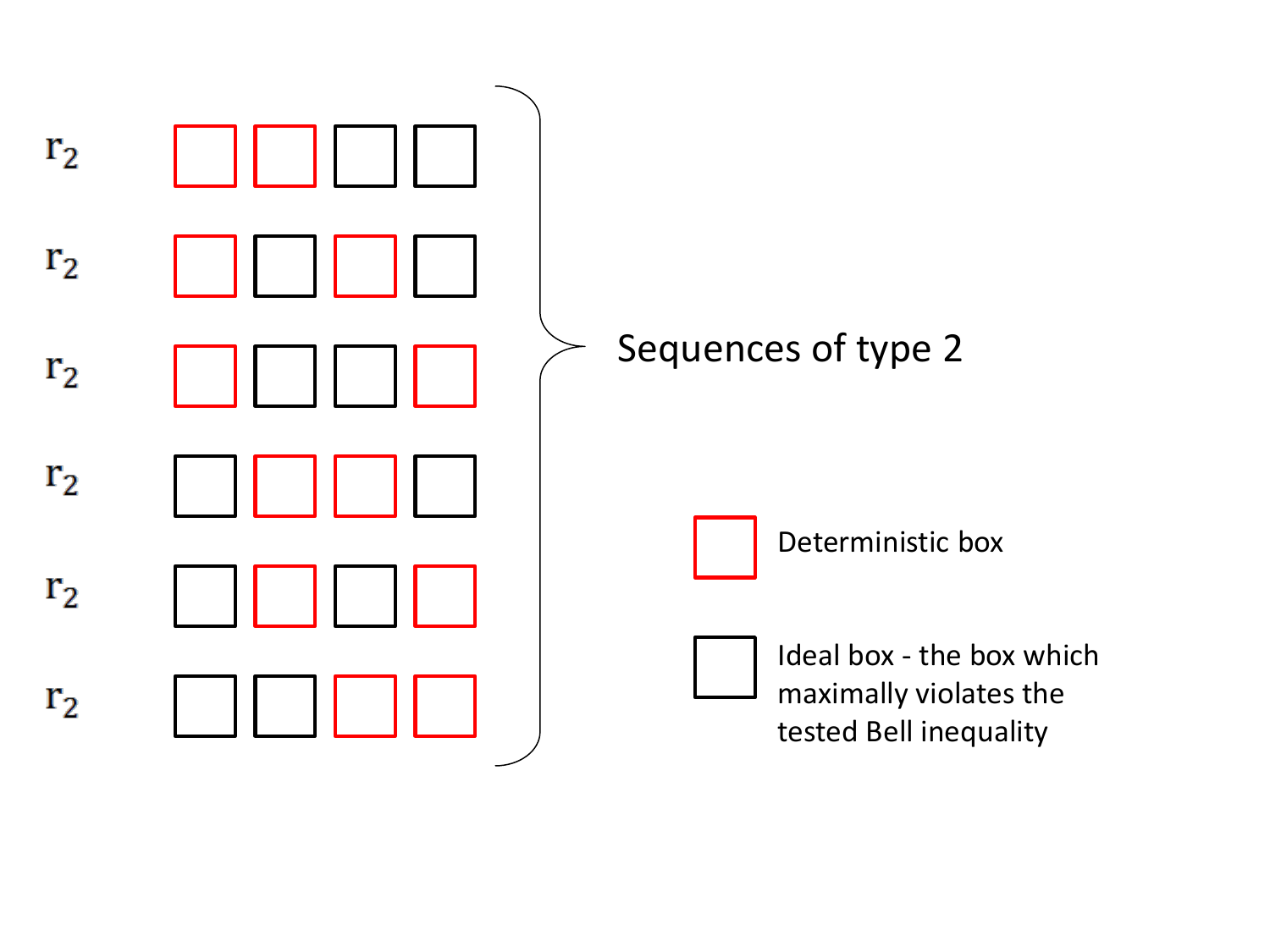}
		\caption{%
			Possible arrangements of 2 bad boxes in a sequence of 4 boxes.
		}
		\label{fig:arrangement}
\end{figure}

{%
	Let us consider the case when any bad box has exactly one contradiction when compared with the correlations in an ideal box. }
\begin{figure*}[!t]
	\centering
	\includegraphics[trim=0cm 2.2cm 0cm 0cm, width=15cm]{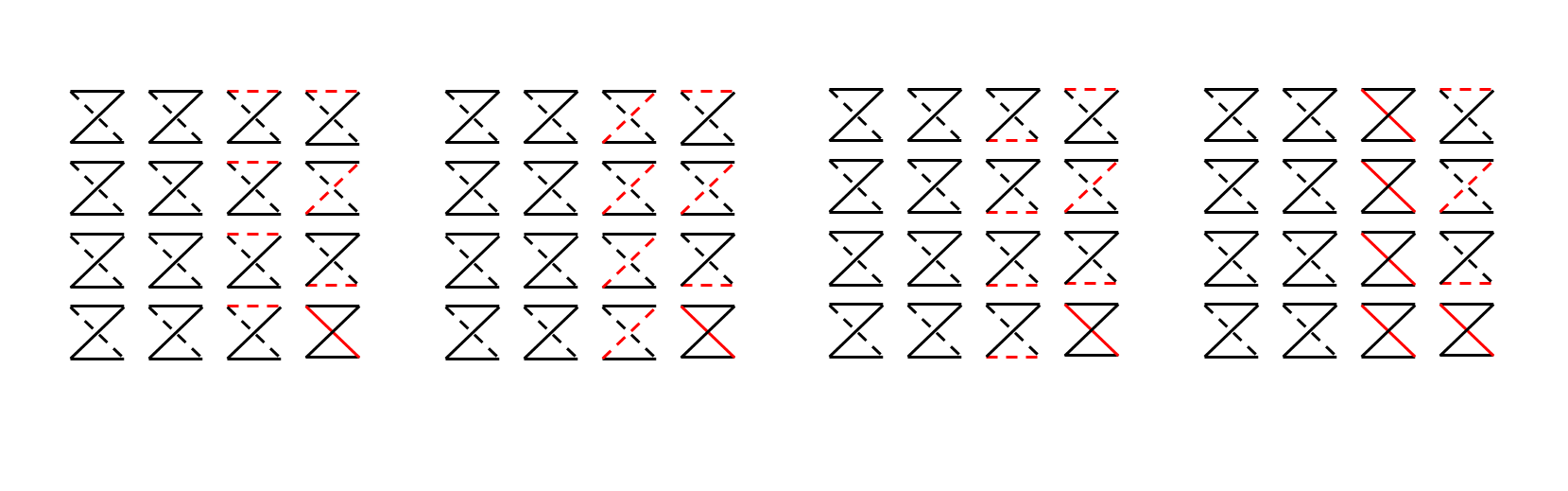}
	\caption{%
		There are $n^j$ sequences of type $j$ and of certain arrangement, e.g. in case of CHSH inequality, 16~different sequences are of type 2 and arrangement: 2 PR-boxes and 2 bad boxes. The edges with mismatched correlations are marked in red.	}
	\label{fig:chain_arrangement}
\end{figure*}
	In this case, there are 
	$\binom{m}{j} n^j$ possible sequences of type $j$ (since the contradiction can happen in any one of the $n$ different measurement pairs, see an example in Fig. \ref{fig:chain_arrangement}). Furthermore, consider the case when every sequence of type $j$ is equally likely, i.e. appears with the same probability~$r_j$, this gives that
\begin{align}
	\label{eq:Pj_as_rj}
	P_j=\binom{m}{j} n^jr_j.
\end{align}


Recall that $f\in\{1,\ldots, m\}$ is the number drawn using bits from the $\varepsilon$-SV source, which indicates the position of a box in a sequence from which the final bit is recorded. Let an arbitrary sequence of type $k$ be denoted by Seq$_k$. 
Then, we consider a family of the attack strategies given by the joint probability of $f$ and all possibly supplied sequences which satisfy, for a given parameter $\lambda \in \left(0,1\right]$, the following condition:
\begingroup\makeatletter\def\f@size{9.5}\check@mathfonts
\begin{align}
	\label{def:attack}
	\begin{aligned}
		P\left(f=i|\text{Seq}_k\right)
		=\left\{
		\begin{array}{lll}
			\frac{\lambda}{k}&\; 
			\text{for $i$}& \text{ being the position number}\\ &&\text{ of bad box in Seq$_k$}\\
			\frac{1-\lambda}{m-k}&\; \text{for $i$}& \text{ being the position number}\\ &&\text{  of ideal boxe in Seq$_k$}.
		\end{array}
		\right.
	\end{aligned}
\end{align}
\endgroup

\begin{figure*}[!t]
	\centering
	\includegraphics[trim=0cm 0.6cm 0cm 0cm, width=10cm]{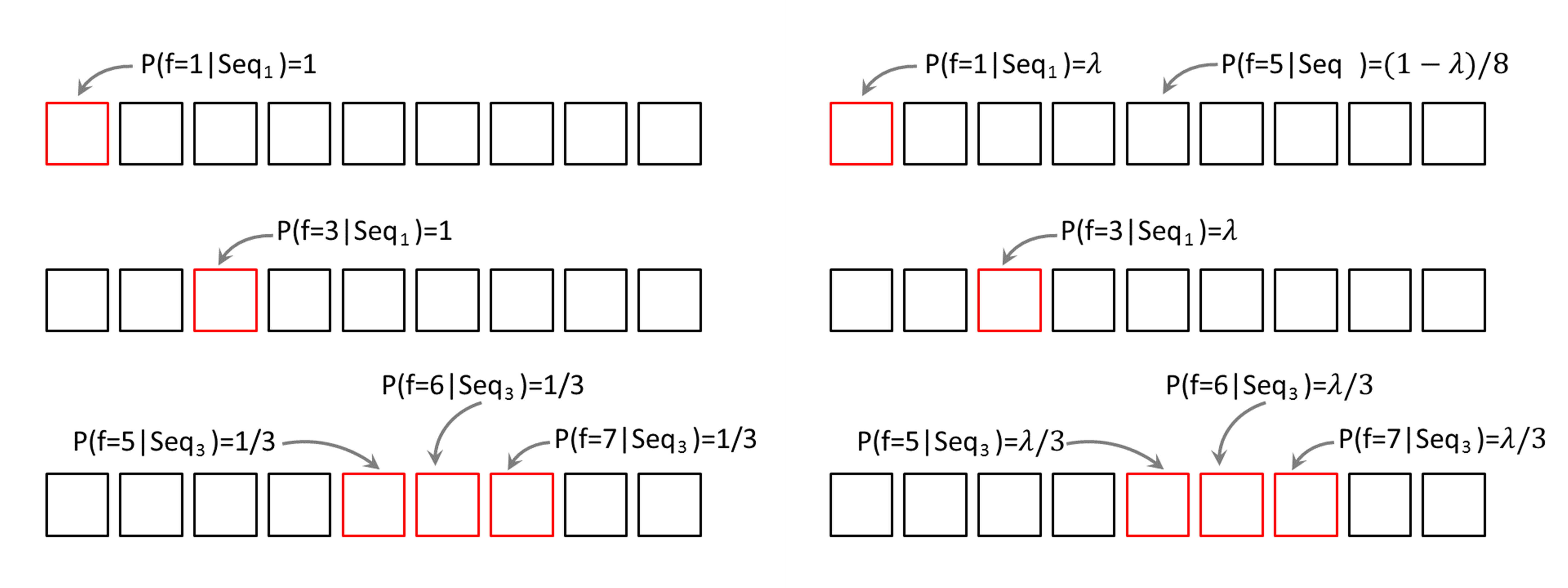}
	\caption{{
		(Left) The probability of $f$ is spread uniformly over bad boxes (cf. Eq. (\ref{def:attack}) with $\lambda=1$). (Right) The attack allows to take the final bit from an ideal box with probability $(1-\lambda)$, which is spread uniformly over all ideal boxes. Moreover, $f$ is distributed uniformly over the bad boxes with probability $\lambda\in(0,1]$ (cf. Eq. (\ref{def:attack}) with an arbitrary parameter $\lambda\in(0,1]$).}	}
	\label{fig:attack}
\end{figure*}

{
\begin{remark}
Note that $P_m> 0$ and $r_m>0$ only in the case when $\lambda=1$. Otherwise, when $\lambda<1$, we set $P_m=r_m=0$, since (according to the definition of an attack) $f$ should indicate the position of an ideal box with probability $1-\lambda$, which is not possible while having no ideal box is a sequence.  
\end{remark}}

Possible attacks are exemplified in Fig. \ref{fig:attack}.

\subsection{Assumptions on the attack strategy}
\label{sseq:old:6f}

We assume that in the attack strategy, any bad box has exactly one contradiction when compared with the correlations in an ideal box. That any attack strategy without this assumption is strictly weaker is justified in Appendix \ref{appendix2}, intuitively it is clear that using local boxes with more contradictions simply decreases the probability of acceptance for the protocol (since the observed Bell value increases) in comparison to using boxes with a single contradiction while yielding the same lack of randomness in the output.  

{After taking the above considerations into account, we end with the following assumptions on the particular class of attacks considered in this paper.}
{
\begin{enumerate}
	\item[1.] We assume that the attack consists of box sequences made of extremal boxes for each run, and defer the consideration of the general attack consisting of a large box coherent over all runs for future work.
	\item[2.] We assume that the attack is symmetric in the sense that every box sequence of particular type $j$ (i.e., containing $j$ bad boxes) appears with the same probability as in Eq.(\ref{eq:Pj_as_rj}).
	\item[3.] We assume that $f$, drawn from the source, is distributed uniformly over the bad boxes with probability $\lambda$ and uniformly over the good boxes with probability $1-\lambda$ for any particular sequence Seq$_k$ as specified in Eq. (\ref{def:attack}). 
\end{enumerate}}

\subsection{Probability of acceptance of the protocol}\label{sec:prob_acc}

Recall that $a$ denotes the probability of not detecting a contradiction with the correlations of an ideal box when measuring a bad box in a single run. Then, the probability of not aborting the protocol, which happens if and only if the correlations in all the runs are compatible with the ideal correlations, is  
described by the following expression:
\begin{equation}
	\label{def:P(ACC)}
	P(\text{ACC})=\sum_{k=1}^mP_ka^k.
\end{equation}
Let us now compute $a$ for the protocol based on the chained inequality. Note that, since only one measurement can be performed, the probability that an edge with contradiction is measured is, in case of uniform and independent inputs, as small as $1/n$ and can be even smaller in the case of inputs taken from the source.
Due to Theorem \ref{thm:ratio}, 
\begin{align}
	n\delta^{\text{true}}\left(\frac{p_{\min}\zeta_{\min}}{p_{\max}}\right)\leq \delta^{\text{obs}},
\end{align}
so that the probability that an edge with contradiction is measured by Alice and Bob is bounded from below by $p_{\min}\zeta_{\min}/p_{\max}$, which in turn implies that
\begin{align}
	a=1-\frac{p_{\min}\zeta_{\min}}{p_{\max}}.
\end{align}
Note that when we consider the probability of not detecting that a subsequent box is local, it is a conditional probability with all proceeding measurements in the condition (see Remark \ref{rem:neglecting_e} in Section \ref{sec:theObservedBellValuefromSVcondition} about an arbitrary random variable $e$ that is prior to the protocol).

{In the rest of the paper, we will show that the protocol stays secure under the class of attacks  described in Section \ref{sseq:old:6f}.

\subsection{Main result}

Let us set $a=(1-p_{\min}\zeta_{\min}/p_{\max})$ (from Section {\ref{sec:prob_acc}}) and $m=|\mathcal{S}|=(n/2)^{1.99}$ (which follows from the requirements of the protocol and the rules of quantum mechanics, see Remark \ref{remark:m}). We approximate terms $p_{\min}$, $p_{\max}$ and $\zeta_{\min}$ as we did in Eq. (\ref{def:p_min/max}).

{
Let us denote by $\mathbf{s}$ the vector of input pairs (which form an edge in a chain), i.e. $\mathbf{s}=\left(s_1,\ldots,s_m\right)$, where $s_i=(u_i,v_i)$ and, according to the protocol, $u_i$ and $v_i$ are drawn form an $\varepsilon$-SV source. Moreover, let us fix an arbitrary value of the variable $f$ taken form an $\varepsilon$-SV source, say $f=f_0$, $f_0\in\{1,\ldots,m\}$. Now, when $f=f_0$ and $\mathbf{s}$ are fixed, the marginal box, from which the output of the protocol shall be obtained, is of the form
\begin{align}
\begin{aligned}
\sum_{\xi_1}\ldots\sum_{\xi_{f_0-1}}\sum_{\xi_{f_0+1}}\ldots\sum_{\xi_m}B_{\mathbf{s},f_0}\left(\xi_1,\ldots,\xi_m,z|\mu_1,\ldots,\mu_m,w\right)
\overset{\text{NS}}{=}B_{\mathbf{s},f_0}^{(f_0)}\left(\xi_{f_0},z|\mu_{f_0},w\right),
\end{aligned}
\end{align}
where $B_{\mathbf{s},f_0}$ is the mixture of extremal box sequences used for the considered class of the attacks (see Sections \ref{sseq:old:6e} and \ref{sseq:old:6f}) and adapted to both $\mathbf{s}$ and $f=f_0$ (as indicated in the indexes), which is the manifestation of the correlations between the $\varepsilon$-SV source and the device. Further, ${\mu}=(\mu_1,\ldots,\mu_m)$ and ${\xi}=(\xi_1,\ldots,\xi_m)$ are the vectors of inputs and outputs of the box, respectively, and $(w,z)$ is the input-output pair of an eavesdropper.  Note that in the case of the honest parties, who follow the protocol given in Section \ref{sec:randomnessAmplificationProtocol}, ${\mu}$ is perfectly correlated with $\mathbf{s}$.

The proof of the following lemma may be found in Appendix \ref{app:lem,f,seq}.
\begin{lemma}\label{lem:f,seq}
Under the assumptions 1-3 outlined in Section \ref{sseq:old:6f}), we obtain
\begin{align}\label{eq:lambda/m}
\begin{aligned}
\sum_{\text{Seq}}P(f=f_0,\text{Seq})\mathbbm{1}_{\{\text{position numbers of det. boxes in Seq}\}}(f_0)=\frac{\lambda}{m},
\end{aligned}
\end{align}
{where Seq denotes an arbitrary sequence of extremal boxes which are product with one another and the sum is over all sequences of this type} ("det." is just the abbreviation for "deterministic"). Let us also indicate that 
\begin{align}\label{eq:1}
\begin{aligned}
\mathbbm{1}_{\{\text{position numbers of det. boxes in Seq}\}}(f_0)=
\left\{\begin{array}{ll}
1,&\text{if the $f_0$-th box in Seq is deterministic}\\
0,&\text{if the $f_0$-th box in Seq is ideal}
\end{array}.\right.
\end{aligned}
\end{align}
As a consequence, we have
\begin{align}\label{eq:f,seq}
\begin{aligned}
&\sum_{f_0=1}^m\sum_{\text{Seq}}P(f=f_0,\text{Seq})\mathbbm{1}_{\{\text{position numbers of det. boxes in Seq}\}}(f_0)\\&=\lambda.
\end{aligned}
\end{align}
{Note that the parameter $\lambda\in(0,1]$ stems from the attack of Eve and determines the average probability (over $f$) of not obtaining a random bit under the protocol given in Section \ref{sec:randomnessAmplificationProtocol}. }
\end{lemma}

\begin{corollary}\label{corol_}
Let $f=f_0$ and $\mathbf{s}$ be arbitrarily fixed. The fraction of boxes which are not ideal (i.e. boxes which do not generate fully or almost fully random bits) {within $B_{\mathbf{s},f_0}^{(f_0)}=\sum_{\text{Seq}}\left(\text{Tr}_{i\neq f_0}\text{Seq}\right)$} is then equal to $\lambda/(mP(f=f_0))$. Indeed, note that Eqs.  (\ref{eq:lambda/m}) and (\ref{eq:1}) immediately imply the following: 
\begin{align}
\sum_{\text{Seq}}P(\text{Seq}\text{ with $f_0$-th det. box}|f=f_0)=\frac{\lambda}{mP(f=f_0)}.
\end{align}
\end{corollary}
}

{
\begin{definition}
Let us define the composable distance $d_c$ between fully random bits and final bits generated within the protocol stated in Section \ref{sec:randomnessAmplificationProtocol} as follows:
\begin{align}\label{def:d_c_prot}
d_c=\sum_{f_0=1}^mP(f=f_0)d_c|_{f_0}\left(B_{\mathbf{s},f_0}^{(f_0)}\right),
\end{align}
where $d_c|_{f_0}(B_{\mathbf{s},f_0}^{(f_0)})$ are determined by Eq. (\ref{def:d_c}) for every $f_0\in\{1,\ldots,m\}$. {Hence $d_c$ denotes the distance in the case when $f$ is known by the distinguisher, i.e. this is the composable distance with public $f$.}
\end{definition}}

Let us now state the definition of the secure protocol.}
\begin{definition}
	We define the probability of error {as the following joint probability:  
	\begin{align}
		P(\text{error}) = P\left(d_c > d^{\text{thr}}(n), P({\text{ACC}}) > \text{ACC}^{\text{thr}}(n)\right),
	\end{align}
	where $d^{\text{thr}}(n)$ and $\text{ACC}^{\text{thr}}(n)$ are some thresholds for $d_c$ and $P(\text{ACC})$, respectively. }
	{We say that a randomness amplification protocol is composably secure  if there exist \( d^{\text{thr}}(n) \) and \( \text{ACC}^{\text{thr}}(n) \), both converging to zero with increasing  $n$}, such that P(\text{error})=0.
\end{definition}

{Let us now state the following crucial lemma.}
\begin{lemma}
	\label{lemma:accBound}
	Assume that the correlations between the source and the device are constrained as in Remarks {\ref{rem1} and \ref{rem2}}. Under the assumptions on the attack strategy outlined in Section \ref{sseq:old:6f}, {the probability of accepting the protocol from Section \ref{sec:randomnessAmplificationProtocol} is estimated as follows:}
	\begin{align}
		P({\text{ACC}}) \leq a^{\lambda/\left(p_+^{\log(m)}\right)}.
	\end{align}
\end{lemma}
Let us postpone the proof of Lemma \ref{lemma:accBound} {to Sections \ref{sec:P(ACC)}-\ref{sec:optimal}} in order to formulate now the main result of this paper. 
\begin{theorem}
	\label{thm:epsilon2}
	Under the assumptions from Lemma \ref{lemma:accBound}, the protocol given in Section \ref{sec:randomnessAmplificationProtocol} is a composably secure randomness amplification protocol for every {private} \( \varepsilon \)-SV source with $\varepsilon < 0.0132$.
\end{theorem}
\begin{proof}
{
First of all, let us note that 
\begin{align}
\begin{aligned}
d_c&\overset{\text{Eq. (\ref{def:d_c})}}{=}\sum_x\max_w\sum_zP(z|w)\left|P(x|z,w)-\frac{1}{2}\right|
\leq 2\max_w\sum_zP(z|w)\sum_x\left|P(x|z,w)-\frac{1}{2}\right|\\
&\leq 2\max_w\sum_zP(z|w)\sum_x\left|\sum_uP(u)P(x|u,z,w)-\frac{1}{2}\right|
\leq 2\max_w\sum_zP(z|w)\sum_x\max_u \left|P(x|u,z,w)-\frac{1}{2}\right|.
\end{aligned}
\end{align}
Therefore, for fixed $\mathbf{s}$ and $f=f_0$ ({recall that $s_i=(u_i,v_i)$ is the pair of inputs of Alice and Bob in the $i$-th run of the protocol, while $x_i$ is the output for Alice in this run}),
we obtain the following estimation:
\begingroup\makeatletter\def\f@size{9.5}\check@mathfonts
\begin{align}
\begin{aligned}
d_c|_{f_0}\left(B_{\mathbf{s},f_0}^{(f_0)}\right)
&\overset{\text{Eq. (\ref{def:d_c})}}{=}\sum_{x_{f_0}}\max_w\sum_zB_{\mathbf{s},f_0}^{(f_0)}(z|w)\left|B_{\mathbf{s},f_0}^{(f_0)}(x_{f_0}|z,w)-\frac{1}{2}\right|\\
&\leq 2\max_w\sum_zB_{\mathbf{s},f_0}^{(f_0)}(z|w)\Big(\max_u\left|B_{\mathbf{s},f_0}^{(f_0)}(x_{f_0}=0|u,z,w)-\frac{1}{2}\right|
+\max_u\left|B_{\mathbf{s},f_0}^{(f_0)}(x_{f_0}=1|u,z,w)-\frac{1}{2}\right|\Big)\\
&\overset{\text{Corollary \ref{corol_}}}{=}2\frac{\lambda}{mP(f=f_0)}\max_w\sum_zB_{\mathbf{s},f_0}^{(f_0)}(z|w)\\
&\qquad\times\Bigg(\max_u\left|P_{bad}(x_{f_0}=0|u,z,w)-\frac{1}{2}\right|
+\max_u\left|P_{bad}(x_{f_0}=1|u,z,w)-\frac{1}{2}\right|\Bigg)\\
&\leq \frac{2\lambda}{mP(f=f_0)}
\end{aligned}
\end{align}
\endgroup
and hence, according to Eq. (\ref{def:d_c_prot}), we have
\begin{align}\label{eq:d_c<}
d_c\leq 2\lambda,
\end{align}
which means that for some small value of $\lambda\in(0,1]$, the composable distance $d_c$ is also small (on average over $f$), while for the higher value of the parameter $\lambda$ no proper upper bound on $d_c$ can be guaranteed.
}

	Lets \( \beta \) be some small positive constant. 
	Then, let 
	\begin{align}
		d^{\text{thr}}(n) := 2n^{-\beta}
	\end{align}
	and
	\begin{align}\label{eq:ACC^thr}
		{\text{ACC}^{\text{thr}}(n) := \left(1-\frac{p_-^{12\log (n/2)}}{np_+^{12\log (n/2)}}\right)^\frac{n^{-\beta}}{p_+^{1.99 \log (n/2)}}}.
	\end{align}
	Obviously, \( \lim_{n \to \infty} d^{\text{thr}}(n) = 0 \). {The second bound converges to zero with increasing $n$ for every $\varepsilon$ satisfying $(0.5-\varepsilon)^{12}-2^{1+\beta}(0.5+\varepsilon)^{13.99}>0$. By putting $\beta$ to zero, we obtain the threshold value for $\varepsilon$, i.e. $0\leq\varepsilon< 0.0132$. 
	
	}
	
	From Lemma \ref{lemma:accBound} we know that $P({\text{ACC}}) \leq a^{\lambda/\left(p_+^{\log(m)}\right)}. $ 
	Lets us choose \( \lambda_0 = n^{-\beta} \).
	There are two possibilities, depending on the attacker's choice of the parameter \( \lambda\in(0,1] \).
	If \( \lambda > \lambda_0 \), then 
	\begin{align}
		P({\text{ACC}}) \leq a^{\lambda/\left(p_+^{\log(m)}\right)} < \text{ACC}^{\text{thr}}(n),
	\end{align}	
	which implies that 
	\begin{align}
		P(\text{error}) = P\left(d_c > d^{\text{thr}}(n), P({\text{ACC}}) > \text{ACC}^{\text{thr}}(n)\right) = 0.
	\end{align}	
	{On the other hand, if \( \lambda \leq \lambda_0 \), then using Eq. \ref{eq:d_c<} we obtain 
	\begin{align}
d_c\leq 2\lambda\leq 2n^{-\beta},
	\end{align}	
	so we also get
	\begin{align}
		P(\text{error}) = P\left(d_c > d^{\text{thr}}(n), P({\text{ACC}}) > \text{ACC}^{\text{thr}}(n)\right) = 0,
	\end{align}	
	which ends the proof.}
\end{proof}
{To complete the security proof, it now remains to establish Lemma \ref{lemma:accBound}. To do this, we first need to introduce some useful notation and the main concepts of our reasoning.}

\subsection{The notion of clouds}

If we measure a bad box, we may either observe a~contradiction with the correlations in an ideal box or not. Not observing a contradiction does not guarantee that the box is ideal. This leads to the notion of clouds, i.e., classes of boxes compatible with a given sequence of observations for a chosen sequence of measurement inputs. If $1$ denotes the event that a contradiction is observed and $0$ denotes the complementary event, the pattern of zeros and ones (of length $m$), together with the chosen sequence of measurement settings, defines the cloud. Let a sequence of measurement settings be fixed. We denote the cloud by $\mathcal{C}^{\mathbf{l}}$, where $\mathbf{l}=(l_1,\ldots,l_m)$ and $l_1,\ldots,l_m\in\{0,1\}$. Note that $|\mathbf{l}|=\sum_{j=1}^ml_j$ delivers information about the number of detected contradictions, hence detected bad boxes. So there are at least $|\mathbf{l}|$ bad boxes in the sequence which has been measured (see Figs. \ref{fig:cloud} and \ref{fig:cloud_chain}). Hence, in every cloud $C^{\mathbf{l}}$ there are boxes of type $q$ for $q \geq |\mathbf{l}|$, but only of certain arrangements, determined by the performed measurements (see Fig. \ref{fig:cloud} for an example set of arrangements).

\begin{figure}[!t]
	\centering
		\includegraphics[trim=0cm 3.2cm 0cm 2cm, width=7.8cm]{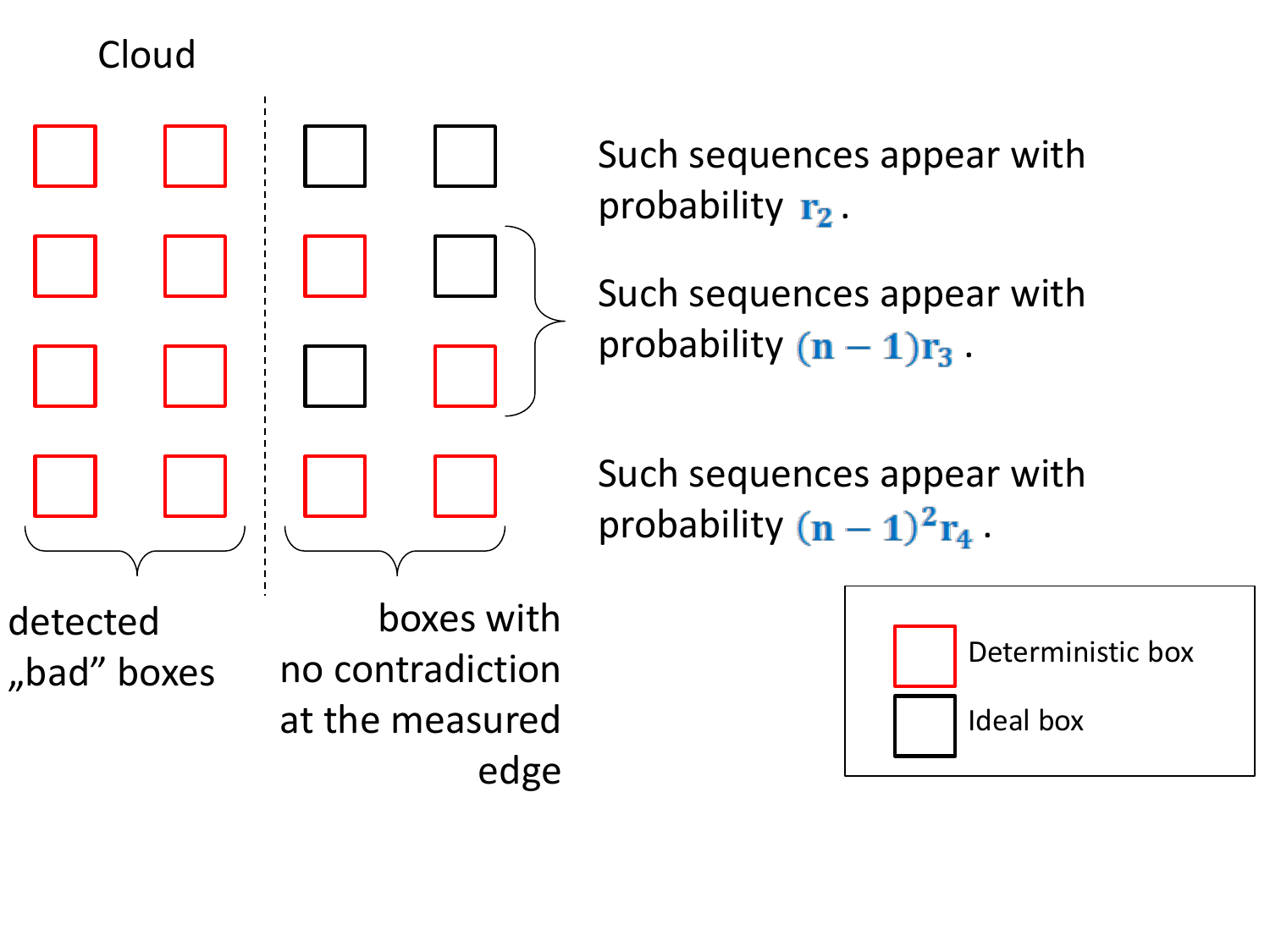}
		\caption{%
			The cloud $\mathcal{C}^{(1,1,0,0)}$ (with 2 detected bad boxes).
			First two boxes are bad, which is known after performing a measurement, the next two may be either ideal or bad boxes.
		}
		\label{fig:cloud}
\end{figure}

\begin{figure}[!t]
	\centering
		\includegraphics[trim=0cm 0.2cm 0cm 0cm, width=8.2cm]{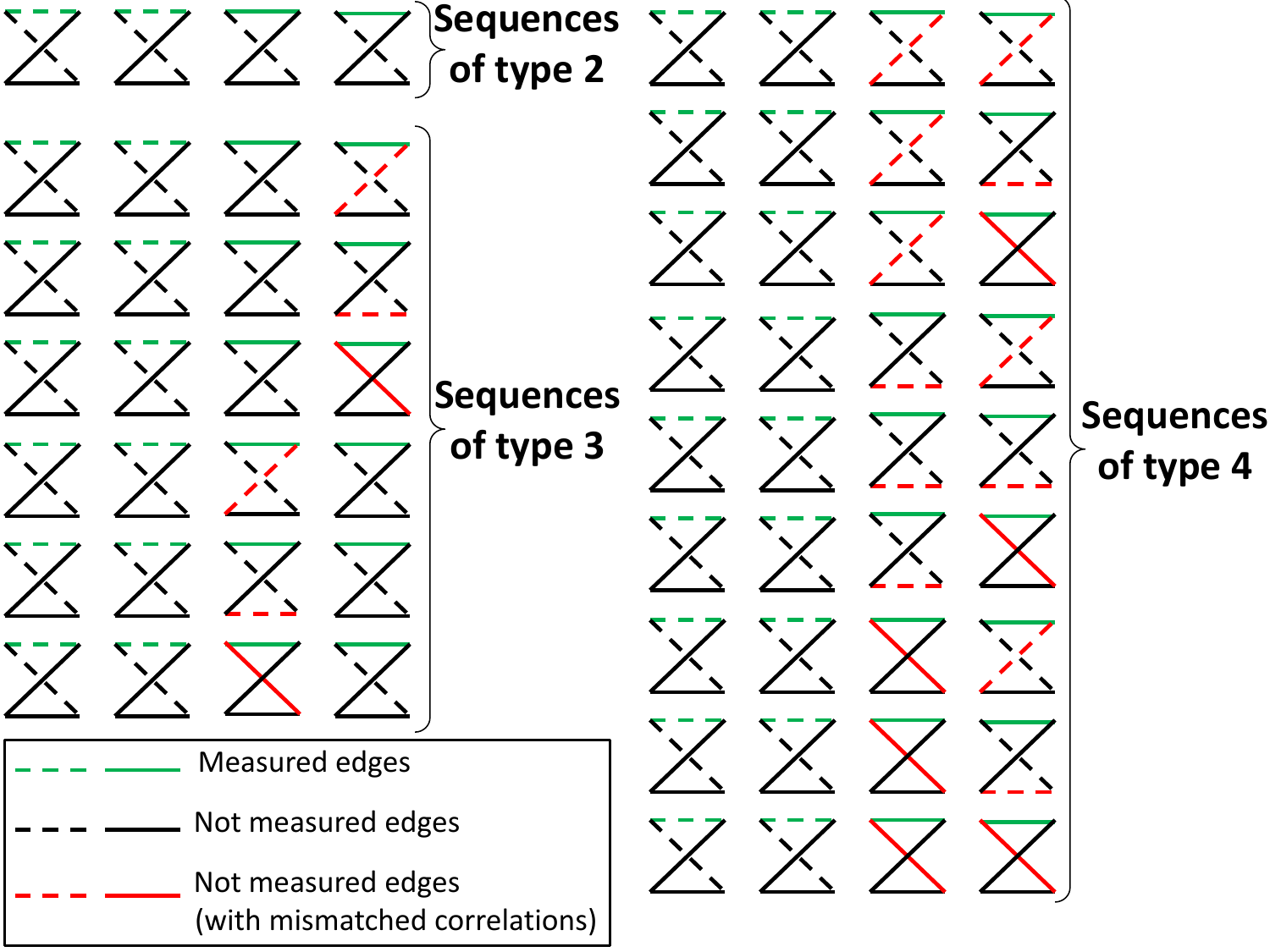}
		\caption{%
			The cloud $\mathcal{C}^{(1,1,0,0)}$  
			in case of CHSH inequality. 
			First two boxes are bad, which is known after performing a measurement, the next two may be either PR-boxes or bad boxes.
		}
		\label{fig:cloud_chain}
\end{figure}

Note that detecting a contradiction gives certainty that the box is bad, as well as the knowledge where exactly the contradiction appears. Not detecting a contradiction delivers only information that there is no contradiction at the certain edge which has been measured. We may not exclude the possibility that there is a contradiction at any other (non-measured) edge (which is also indicated in the example in Fig. \ref{fig:cloud}). 
It should be noted that clouds overlap at each other, i.e., the same sequence of boxes may appear in multiple clouds. 
Let $Q_l=P(\mathcal{C}^{\mathbf{l}})$ for $|\mathbf{l}|=l$. 
Referring to the above analysis {(especially Section \ref{sseq:old:6e})}, we obtain
\begin{equation}
	\label{eq:cloud_prob}
	Q_l=\sum_{s=0}^{m-l}\binom{m-l}{s}(n-1)^s r_{l+s}\quad\text{for }l\in\{1,\ldots,m\}.
\end{equation}
{%
	Note that there are 
	$\binom{m}{l}$ clouds which appear with probability $Q_l$
}.

\subsection{Constraints following from the SV-condition for boxes}

We have that 
\begin{align}
p_-^{\log m} \leq P\left({f=i}|\text{a sequence of measurements} \right) \leq p_+^{\log m},
\end{align}  since $f$ is a bit string 
drawn from the $\varepsilon$-SV source after the bits corresponding to the sequence of measurements are drawn from the same source.
The assumed SV-condition for boxes in Eq.(\ref{eq:SVforBoxes_S}) then implies that 
\begin{equation}
\begin{aligned}
	p_-^{\log m}
	&\leq P\left({f=i}|\text{a sequence of measurements and outcomes}\right)
	\leq p_+^{\log m}
\end{aligned}
\end{equation}
for every $i\in\{1,\ldots,m\}$. 
Note that there is a one-to-one correspondence between the sequence of measurements and outcomes and its corresponding cloud. 
Suppose that measurement settings are fixed and some outcomes are obtained. Then the appropriate cloud $\mathcal{C}^{\mathbf{l}}$ is determined and we have
\begin{equation}
	\label{eq:SV_cloud}
	p_-^{\log m}\leq P\left({f=i}|\mathcal{C}^{\mathbf{l}}\right)\leq p_+^{\log m}\quad\text{for }\;i\in\{1,\ldots,m\}.
\end{equation}
{Let us set $c_+ := p_+^{\log m}/\lambda$ and $k := |\mathbf{l}|$. Since $P$(ACC), given by Eq. (\ref{def:P(ACC)}), is defined in terms of probabilities $P_k$ (see Eq. (\ref{eq:Pj_as_rj})), condition (\ref{eq:SV_cloud}) should also be rewritten in this way. Due to the definition of attack (see Eq. (\ref{def:attack})) and the properties of clouds, we obtain
\begin{align}
	\label{eq:bound}
	\begin{aligned}
		\sum_{s=0}^{m-k}\left(\frac{1}{k+s}-c_+\right)\binom{k+s}{k}\left(\frac{n-1}{n}\right)^sP_{k+s}
		\leq 0.
	\end{aligned}
\end{align}
The derivation of Eq. (\ref{eq:bound}) is given in Appendix \ref{appendix3}.

\subsection{Probability of acceptance as a linear program}\label{sec:P(ACC)}

The probability of acceptance can therefore be formulated as the following linear program. We want to maximize the expression
\begin{equation}
	\sum_{k=1}^mP_ka^k
\end{equation}
such that 
\begin{align}
	\begin{aligned}
		\sum_{s=0}^{m-k}\left(\frac{1}{k+s}-c_+\right)\binom{k+s}{k}\left(\frac{n-1}{n}\right)^sP_{k+s}
		\leq 0
	\end{aligned}
\end{align}
for every $k\in\{1,\ldots,m\}$, and
\begin{equation}
	\sum_{k=1}^mP_k\leq 1,\qquad
	\sum_{k=1}^m-P_k\leq -1,
\end{equation}
where the problem constraints follow from Eqs. (\ref{eq:bound}) and {(\ref{eq:sum_P_j})}. Obviously, 
\begin{equation}
	P_k\geq 0\qquad\text{for every }k\in\{1,\ldots,m\}.
\end{equation}
Note that the linear program written above is at once in its standard form, that is
\begin{align}
	\begin{aligned}
		&\max \left\{\vec{c}^T\vec{x}\right\}\\
		&\text{such that}\quad A\vec{x}\leq \vec{b}\\
		&\text{and the variables are non-negative}\quad \vec{x}\geq 0,
	\end{aligned}
\end{align}
where $\vec{x}=\left(P_1,\ldots,P_m\right)^T$, $\vec{c}=\left(a,a^2,\ldots,a^m\right)^T$, $\vec{b}=\left(0,\ldots,0,1,-1\right)^T$ 
and $A$ is an $(m+2)\times m$ matrix of the form (\ref{matrix}).
\begin{figure*}[!t]
\begin{align}\label{matrix}
	\begin{aligned}
		A=\left[
		\begin{array}{ccccc}
			\binom{1}{0}\left(\frac{n-1}{n}\right)^0\left(1-c_+\right)
			&\binom{2}{1}\left(\frac{n-1}{n}\right)^1\left(\frac{1}{2}-c_+\right)
			&\binom{3}{2}\left(\frac{n-1}{n}\right)^2\left(\frac{1}{3}-c_+\right)
			&\ldots
			&\binom{m}{m-1}\left(\frac{n-1}{n}\right)^{m-1}\left(\frac{1}{m}-c_+\right)\\
			0
			&\binom{2}{0}\left(\frac{n-1}{n}\right)^0\left(\frac{1}{2}-c_+\right)
			&\binom{3}{1}\left(\frac{n-1}{n}\right)^1\left(\frac{1}{3}-c_+\right)
			&\ldots
			&\binom{m}{m-2}\left(\frac{n-1}{n}\right)^{m-2}\left(\frac{1}{m}-c_+\right)\\
			\vdots&\vdots&\vdots&\vdots&\vdots\\
			0&0&0&\ldots&\binom{m}{0}\left(\frac{n-1}{n}\right)^0\left(\frac{1}{m}-c_+\right)\\
			1&1&1&\ldots&1\\
			-1&-1&-1&\ldots&-1
		\end{array}
		\right].
	\end{aligned}
\end{align}
\end{figure*}

\subsection{Dual problem}
{Let us consider the following dual problem:}
\begin{align}
	\begin{aligned}
		\left\{%
		\begin{array}{ll}
			&\min \{\vec{b}^T\vec{y}\}\\
			&A^T\vec{y}\geq \vec{c}\\
			&\vec{y}\geq 0.
		\end{array}
		\right.
	\end{aligned}
\end{align}
By linear programming duality, if either the primal or dual has an optimal solution, then both have optimal solutions and the optimal values of the objective functions of these problems are equal.

In our case the dual problem is as follows:
\begin{align}
	\label{dual}
	\min \{y_{m+1}-y_{m+2}\}
\end{align}
{with constraints}
\begin{equation}
	\label{constraint_dual}
	\sum_{r=0}^{k-1}\binom{k}{r}\left(\frac{n-1}{n}\right)^r\left(\frac{1}{m}-c_+\right)y_{k-r}+y_{m+1}-y_{m+2}\geq a^k
\end{equation}
for $k\in\{1,\ldots,m\}$, 
and 
\begin{equation}\label{dual_last}
	y_1\geq 0,\quad \ldots,\quad y_m\geq 0,\quad y_{m+1}\geq 0, \quad y_{m+2}\geq 0.
\end{equation}
We find the following feasible solution to the dual problem, formulated as Lemma \ref{lem:feasibility}, and proven in Appendix \ref{appendix4}:
\begin{align}
	\label{hypothesis}
	\begin{aligned}
		y_1=\frac{a^{1/c_+}(1-a)}{\left(\frac{1}{c_+}+1\right)\left(\frac{n-1}{n}\right)^{1/c_+}},\qquad
		y_2=y_3=\ldots=y_m=y_{m+2}=0,\qquad
		y_{m+1}=a^{1/c_+}.
	\end{aligned}
\end{align}
\begin{lemma}
	\label{lem:feasibility}
	Hypothesis (\ref{hypothesis}) gives {the} feasible solution of {the} dual problem described by Eqs. {(\ref{dual})-(\ref{dual_last})}.
\end{lemma}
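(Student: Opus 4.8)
\textbf{Proof plan for Lemma \ref{lem:feasibility}.}

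The plan is to directly substitute the candidate point \eqref{hypothesis} into every constraint of the dual program \eqref{dual}--\eqref{constraint_dual} and verify feasibility. Non-negativity of $y_1,\ldots,y_m,y_{m+1},y_{m+2}$ is immediate: all the named-zero coordinates satisfy it trivially, $y_{m+1}=a^{1/c_+}\geq 0$ since $a\in(0,1)$, and $y_1$ is a product of the manifestly positive quantities $a^{1/c_+}$, $(1-a)$, $(\tfrac{1}{c_+}+1)^{-1}$ and $(\tfrac{n-1}{n})^{-1/c_+}$ (recall $c_+=p_+^{\log m}\in(0,1)$ and $a=1-\tfrac{p_{\min}\zeta_{\min}}{p_{\max}}\in(0,1)$). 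So the real content is the family of inequalities \eqref{constraint_dual}, one for each $k\in\{1,\ldots,m\}$.

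First I would plug in the solution. Because only $y_1$ and $y_{m+1}$ are nonzero and $y_{m+2}=0$, the sum $\sum_{r=0}^{k-1}\binom{k}{r}(\tfrac{n-1}{n})^r(\tfrac1m-c_+)y_{k-r}$ collapses: the term $y_{k-r}$ survives only when $k-r=1$, i.e. $r=k-1$, contributing $\binom{k}{k-1}(\tfrac{n-1}{n})^{k-1}(\tfrac1m-c_+)y_1 = k\,(\tfrac{n-1}{n})^{k-1}(\tfrac1m-c_+)y_1$. Hence constraint $k$ becomes
\begin{align*}
k\left(\frac{n-1}{n}\right)^{k-1}\left(\frac{1}{m}-c_+\right)y_1 + a^{1/c_+} \;\geq\; a^k .
\end{align*}
\emph{However}, I should be careful here: the coefficient written in \eqref{constraint_dual} is $(\tfrac1m - c_+)$, whereas the primal matrix $A$ has the more precise entries $\binom{k}{k-r}(\tfrac{n-1}{n})^{k-r}(\tfrac{1}{k}-c_+)$ down the diagonal; I would follow whichever normalization the appendix actually uses, and in the "large $n$, large $m$" regime the dominant estimate is the same. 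Substituting $y_1$ and using $\binom{k}{k-1}=k$, and writing $t:=1/c_+>1$, the inequality to prove is
\begin{align*}
\frac{k\,(1-a)\,a^{t}}{(t+1)}\left(\frac{n-1}{n}\right)^{k-1-t}\left(\frac{1}{m}-c_+\right) \;\geq\; a^k - a^{t}.
\end{align*}
The plan is then to split into two cases. For $k\geq t = 1/c_+$ (equivalently $c_+^{k}\leq c_+^{1/c_+}$ in the relevant monotone sense), the right-hand side $a^k - a^t$ is $\leq 0$ because $a<1$ makes $a^k\leq a^t$, while the left-hand side is $\geq 0$ (the factor $\tfrac1m - c_+$ is positive once $m$ is large enough that $p_+^{\log m} < 1/m$, which holds asymptotically); so the constraint holds with room to spare. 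For $1\leq k < t$, I would bound $a^k - a^t$ from above using convexity/monotonicity of $x\mapsto a^x$: e.g. $a^k - a^{t} \le (t-k)(-\ln a)\,a^{k}$ or, more usefully, $a^{k}-a^{t}\le a^{k}(1 - a^{t-k})$ and then $1-a^{t-k}\le (t-k)(1-a)$ is false in general — rather $1-a^{t-k}\le (t-k)(1-a)$ does hold since $x\mapsto a^x$ is convex and decreasing so $a^{t-k}\ge 1-(t-k)(1-a)$ for $t-k\in[0,1]$; for $t-k>1$ one iterates or uses $1-a^{t-k}\le 1$. Matching this against the left-hand side reduces the whole lemma to a single clean inequality in $a$, $t$, $k$, $m$, $n$, which is exactly the kind of estimate the proof of Theorem \ref{thm:epsilon2} is tuned to satisfy; I would close it by inserting the asymptotic forms $c_+=p_+^{\log m}$, $n = 2^{r+1}$, $m=(n/2)^{2.99}$ from Remark \ref{remark:m}.

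The main obstacle I anticipate is the boundary case $k\approx 1/c_+$ and, relatedly, the transition where $\tfrac1m - c_+$ changes sign for small $m$: the feasible solution is engineered so that constraint $k=1$ is essentially tight (this is why $y_1$ has exactly that value), so the verification for $k$ near $1$ is delicate and cannot afford lossy bounds — one must track the $(\tfrac{n-1}{n})^{k-1}$ factor and the binomial coefficient carefully rather than discarding them. Everything for large $k$ is soft (both sides have a definite sign), so the real work, and the place where the precise choice of $y_1$ matters, is the small-$k$ window $1\le k\lesssim 1/c_+$; that is the estimate I would spend the bulk of Appendix IV on.
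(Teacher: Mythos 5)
Your opening moves match the paper's Appendix IV: substitute the candidate point, note that the sum collapses to the single term $k\left(\frac{n-1}{n}\right)^{k-1}\left(\frac{1}{k}-c_+\right)y_1$, and split into cases around $k\approx 1/c_+$; you are also right that the "$1/m$" in (\ref{constraint_dual}) should be read as "$1/k$", consistent with the matrix $A$ (and indeed the paper chooses $y_1,y_{m+1}$ by forcing equality at $k=1/c_+$ and $k=1/c_++1$, which is where the constraints are tight — not at $k=1$ as you state). However, your verification plan has a genuine gap in the regime $k>1/c_+$. With the correct coefficient, $\frac{1}{k}-c_+<0$ there, so the left-hand side of your displayed inequality is \emph{negative}, and the argument "LHS $\geq 0\geq$ RHS" collapses; your fallback, that $\frac{1}{m}-c_+>0$ for large $m$, is false, since $c_+=p_+^{\log_2 m}=m^{\log_2 p_+}>m^{-1}$ whenever $p_+>1/2$. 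In other words, the case you call "soft" is precisely Step III of the paper's proof and requires real work: one must establish
\begin{align*}
1-a^{\tilde{l}+1}-(1-a)(\tilde{l}+1)\left(\tfrac{n-1}{n}\right)^{\tilde{l}}\geq 0 ,\qquad \tilde{l}+1=k-\tfrac{1}{c_+},
\end{align*}
which does not follow from sign considerations alone.

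The second missing ingredient is the relation $1-a=\frac{p_{\min}\zeta_{\min}}{p_{\max}}\leq \frac{1}{n}$, i.e.\ $a\geq\frac{n-1}{n}$ (Eq. (\ref{eq:p_min<1/n})). This is what allows the paper to absorb the $\left(\frac{n-1}{n}\right)^{\cdot}$ factors and reduce both cases to the clean one-variable inequalities $l(1-a)+a^{l}-1\geq 0$ (for $k<1/c_+$, $l=1/c_+-k$) and $1-a^{\tilde{l}+1}-(1-a)(\tilde{l}+1)a^{\tilde{l}}\geq 0$ (for $k>1/c_++1$), each proved by a monotonicity-in-$l$ argument anchored at the base case $l=1$, where the first expression vanishes and the second equals $(1-a)^2$. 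Without invoking $a\geq\frac{n-1}{n}$, your proposed bounds for the small-$k$ window do not close: the convexity estimate $a^{t-k}\geq 1-(t-k)(1-a)$ only covers $t-k\in[0,1]$, and the fallback $1-a^{t-k}\leq 1$ is too lossy (for fixed $a<1$ and large $l=t-k$ the required inequality $\frac{l(1-a)}{\left(\frac{n-1}{n}\right)^{l+1}}+1\geq a^{-l}$ can genuinely fail unless $a$ is tied to $n$ as above). So the skeleton is right, but both the sign analysis for large $k$ and the quantitative estimate for small $k$ need the ingredients just described before the lemma is actually proved.
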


\subsection{The optimal solution}\label{sec:optimal}

In fact, Eq.(\ref{hypothesis}) is not only a bound on the probability of acceptance but {it is just} an optimal solution to the linear program. To prove that the above solution is optimal, we will show that the objective functions of both, primal and dual, problems are equal. 

Suppose that the solution of the primal problem is given by 
\begin{align}
\begin{aligned}
	P_u=\frac{1}{(1+s(u,v))},\qquad
	P_v=\frac{s(u,v)}{(1+s(u,v))},\qquad
	P_k=0\quad\text{for }k\notin\{u,v\},
\end{aligned}
\end{align}
where 
\begin{align}
	s(u,v)=\frac{un\left(c_+-1/u\right)}{v(n-1)\left(1/v-c_+\right)}>0
\end{align}
for $u\leq 1/{c_+}\leq v$. 
If we set
\begin{align}
	u=\frac{1}{c_+},\qquad v=\frac{1}{c_+}+1,
\end{align}
we obtain $P_{1/c_+}=1$ and $P_{(1/c_++1)}=0$ and therefore
\begin{align}\label{primal=dual}
	\max\left\{\sum_{k=1}^mP_ka^k\right\}=a^{1/c_+}=\min\left\{y_{m+1}-y_{m+2}\right\},
\end{align}
which indicates that the solution is indeed optimal.
However, we should note that to be more accurate, we should take $u$ and $v$ as natural numbers, i.e.
\begin{align}
	u=\left\lfloor\frac{1}{c_+}\right\rfloor,\qquad v=\left\lfloor\frac{1}{c_+}\right\rfloor+1.
\end{align}
{Note that, referring to Eq. (\ref{primal=dual}) and the definition of $c_+$, the proof of Lemma \ref{lemma:accBound} is now complete.}

\section{Conclusion}\label{sec:conclusion}

We have studied the protocol of Colbeck and Renner \cite{cr} under relaxed assumptions which allow for correlations between the Santha-Vazirani source with the devices used in the protocol. We have proven, that in spite of {such attacks}, a non-zero range of parameter of $\varepsilon$-SV source allows for randomness amplification in the asymptotic limit of a large number of settings.
More precisely, the protocol (see Section \ref{sec:randomnessAmplificationProtocol}) is {composably secure} for a restricted range of $\varepsilon$ even if we admit
\begin{itemize}
	\item[(1)] correlations between measurement settings and devices, only limited by the SV-condition for boxes (see Sections \ref{sec:theObservedBellValuefromSVcondition} and \ref{sec:exampleTruevsObsChain}),
	\item[(2)] attacks such that, {with probability equal to $\lambda\in(0,1]$ (describing the strength of the attack)}, $f$~is  pointing to local boxes, i.e. boxes with no intrinsic randomness (correlations of sequences of boxes with $f$ are only limited by condition (\ref{eq:SV_cloud})
	.
\end{itemize}
{The detailed assumptions on the attack strategy are listed in Section \ref{sseq:old:6f}. First of all, the device used for the protocol is given as a mixture of sequences of boxes which are extremal and product with one another. Moreover, we assume that all the attacks exhibit a certain kind of symmetry, i.e. sequences of the same type (so with the same number of local boxes) are treated equally by the adversary, further any local box within a single sequence of extremal boxes can be pointed by $f$ with the same probability}. 
{It is plausible, based on the experience gained while working with the SV-condition for boxes, that the attack with certain symmetry conditions assumed within this paper is in fact optimal.} Nevertheless, it is not yet formally proven that we can admit the symmetry assumptions in the attack without loss of generality. This is the aim for future work {(cf. \cite{maciek})}. Another interesting line of research, which is already in progress, aims to determine whether the attack can be physically performed or not, i.e., whether the correlations between the weak source and the devices can be created by the adversary physically without breaking the SV condition at this stage. Finally, an important open question is whether the techniques used in this paper can be generalized to relax the assumption of independence in the finite-device protocols of \cite{brghhh}, \cite{ravi} so as to obtain randomness amplification for the entire range of $\varepsilon$, while tolerating a constant level of noise. 

\section*{Appendix I}
\label{appendix1}
{Let us derive the so-called backward SV-condition, determining that from a given bit of SV alone, one can not guess perfectly any of of the bits that are preceding it in time order.}  
Suppose that $A$ and $B$ are some portions of bits from an $\varepsilon$-SV source of the same length $|A|=|B|$. Fix $\bar{a},\bar{b}\in\mathcal{I}$, {where $\mathcal{I}$ is the set of possible measurement settings in the Bell expression}. We assume that the probability we consider is normalized, i.e. $\sum_{a\in\mathcal{I}}P(A=a)=1$. 
Let us prove that condition
\begin{align}
	p_{\min}\leq P(B=\bar{b}|A=\bar{a})\leq p_{\max}
\end{align}
implies that 
\begin{align}
	\label{def:zeta_min_max}
	\zeta_{\min}\leq P(A=\bar{a}|B=\bar{b})\leq \zeta_{\max},
\end{align}
where 
\begin{align}
	\zeta_{\min}=\frac{p_{\min}^2}{|\mathcal{I}|p_{\max}^2}\quad\text{and}
	\quad \zeta_{\max}=1-(|\mathcal{I}|-1)\zeta_{\min}.
\end{align}
Note that the definition of an $\varepsilon$-SV source (\ref{SV}) implies that
\begin{align}
	P(A=\bar{a},B=\bar{b})=P(A=\bar{a})P(B=\bar{b}|A=\bar{a})\geq p_{min}^2.
\end{align}
Let us now estimate
\begin{align}
	P(B=\bar{b})=\sum_{a\in\mathcal{I}} P(A=a,B=\bar{b})
	\leq p_{\max}^2 |\mathcal{I}|.
\end{align}
We obtain
\begin{align}
	P(A=\bar{a}|B=\bar{b})=\frac{P(A=\bar{a},B=\bar{b})}{P(B=\bar{b})}\geq \frac{p_{\min}^2}{|\mathcal{I}|p_{\max}^2},
\end{align}
which proves the left side of Eq. (\ref{def:zeta_min_max}). 
The formula for $\zeta_{\max}$ may be justified as follows:
\begin{align}
\begin{aligned}
	P(A=\bar{a}|B=\bar{b})
	&=1-\sum_{a\in\mathcal{I}\backslash \{\bar{a}\}}{P(A={a}|B=\bar{b})}\\
	&\leq 1- \zeta_{\min}(|\mathcal{I}|-1).
	\end{aligned}
\end{align}
\label{appendix5}

 
 
 
 
Let us restate the assumptions in the context of the chained Bell inequality:
\begin{enumerate}
 	\item Alice and Bob are spatially separated and share a no-signaling box with two input sets of size $n/2$ and two binary outputs, which violates the chained Bell inequality up to $\delta_Q$. They choose their settings, each using $r = \log(n/2)$ bits from the main part of the $\varepsilon$-SV source ($n$ is taken to be an appropriate integer of the form $2^{r+1}$), i.e., the variable $I_{HP}$ describing their inputs, is perfectly correlated with $S$ as in Eq.(\ref{eq:P(I|S)}) . 
 	\item The SV-condition for boxes (\ref{eq:SVforBoxes_S}) is satisfied with $p_{\min}, p_{\max}, \zeta_{\min}$ given by (\ref{def:p_min/max}). 
 	\item The main part of the source is correlated with the device used by Alice and Bob. Another part, called SV$_{\text{test}}$, is not directly correlated with a device, it is only used to check whether the SV-condition for boxes is violated (details are described in Section \ref{sseq:old:3e}).
\end{enumerate}
 
\begin{theorem}
	\label{thm:epsilon1}
 	Assume that conditions 1-3 are satisfied. 
 	Then, $\varepsilon< (2^{1/12}-1)/(2(2^{1/12}+1)))$ ($\approx 0.0144$) guarantees full randomness of the output in the asymptotic scenario of a large number of inputs $n \rightarrow \infty$.
\end{theorem}

\begin{proof}
 	{Note that Eq. (\ref{eq:dc<deltaQ}) in Corollary \ref{corol} immediately implies that, to verify that output bits are fully random ($d_c{(B_{\text{final}})}\to 0$), it is enough to show that }
 	\begin{align}
	 	\Delta := \delta_Q  \frac{2p_{\max}}{p_{\min}}\zeta_{\min}\to 0,\qquad\text{as }n\to\infty.
 	\end{align}
 	Following Eqs. (\ref{def:delta_Q}), (\ref{def:p_min/max}), we obtain
 	\begin{align}
 		\begin{aligned}
	 		\Delta&=
		 	{2}\sin^2\left(\frac{\pi}{2n}\right)\frac{p_{\max}}{p_{\min}\frac{p_{\min}^2}{np_{\max}^2}}
	 		\leq {2}\left(\frac{\pi}{2n}\right)^2 \frac{np_{\max}^3}{p_{\min}^3}
		 	=\left(\frac{\pi^2}{{2}}\right)\frac{1}{n}\frac{p_{\max}^3}{p_{\min}^3}\\
	 		&=\left(\frac{\pi^2}{{2}}\right)\frac{1}{n}
		 	\frac{p_+^{6r}}{\left(p_+^{2r}+(n-1)p_-^{2r}\right)^3}
	 		\frac{n^3p_+^{6r}}{p_-^{6r}}
		 	=\left(\frac{\pi^2}{{2}}\right)\frac{n^2p_+^{12r}}{p_-^{6r}\left(p_+^{2r}+(n-1)p_-^{2r}\right)^3}.
	 	\end{aligned}
 	\end{align}
 	Setting $n=2^{r+1}$, we have
 	\begin{align}
 		\begin{aligned}
	 		\Delta=\left(\frac{\pi^2}{2}\right)\frac{4^{r+1}p_+^{12r}}{p_-^{6r}\left(p_+^{2r}+\left(2^{r+1}-1\right)p_-^{2r}\right)^3}.
	 	\end{aligned}
 	\end{align}
 	Let us now consider the asymptotic scenario of a large number of settings $r \rightarrow \infty$,
 	\begin{align}
 		\begin{aligned}
 			\lim_{r\to\infty}
	 		\frac{4^{r+1}p_+^{12r}}{p_-^{6r}\left(p_+^{2r}+\left(2^{r+1}-1\right)p_-^{2r}\right)^3}
	 		=0,
 		\end{aligned}
 	\end{align}
 	which imposes that $\varepsilon$ is bounded as
 	\begin{align}
 		\varepsilon<\frac{2^{1/12}-1}{2\left(2^{1/12}+1\right)} \approx 0.0144.
 	\end{align} 
 	Therefore, for the range $0 \leq \varepsilon < (2^{1/12}-1)/(2(2^{1/12}+1))$, we obtain a random output in the asymptotic scenario of a large number of inputs.
 \end{proof}
 
 \begin{remark}	
 	The threshold is in fact slightly bigger (precisely it is $(2^{1/6(2-c)} - 1)(2(2^{1/6(2-c)} + 1)) \approx 0.0162$ where $c$ solves $H(c/2) = 1/2$ for the binary entropy $H$), which can be proven with more accurate approximations for $p_{\min}$, $p_{\max}$ and $\zeta_{\min}$, obtained by using the Ky Fan norm (see \cite{ghhhpr}), i.e., in the regime of large $n$ 
  	\begin{align}
 	\begin{aligned}
	 	p_{min} = \frac{p_-^{2r}}{p_-^{2r} + 2^r p_+^{(2-c)r} p_-^{cr}}, \;
 		p_{max} = \frac{p_+^{2r}}{p_{+}^{2r} + 2^r p_-^{(2-c)r} p_{+}^{cr}} .
 	\end{aligned}
 	\end{align}
 	
 
\end{remark}

\section*{Appendix 2}
\label{appendix2}
Let us justify that to prove that the protocol is safe it is enough to consider boxes with either zero or one contradiction with the correlations of ideal boxes. It should be noted that using bad boxes with more than one contradiction simply decreases the probability of acceptance $P$(ACC) for the protocol, making the observed Bell value bigger. We now show that the attack with bad boxes possessing more than one contradiction can be improved by replacing these boxes with $1$-contradiction boxes. There is now only one more issue that needs attention. Due to the symmetry assumption, on which our analysis is based, we need to replace boxes in such a way, that the final ensemble is symmetric. Fortunately, it can be easily achieved. Indeed, suppose that any box with $k$ contradictions on edges $e_1,\ldots,e_k$ is replaced (with probability $1/k$) by one of boxes with exactly one contradiction at one of edges $e_1,\ldots,e_k$. Then, if we assume that all boxes with $k$ contradictions are equally likely and are treated as described above, we will obtain the symmetric ensemble used in the main text, which justifies that constraints used in linear programming remain the same.

\section*{Appendix 3}
\label{app:lem,f,seq}
{\begin{proof}[Proof of Lemma \ref{lem:f,seq}]
Let us write shortly $\mathbbm{1}_{\{\text{p.n.d.b.Seq}\}}$ for the 
function $\mathbbm{1}_{\{\text{position numbers of det. boxes in Seq}\}}$. Note that 
\begin{align}
\begin{aligned}
\sum_{\text{Seq}}P(f=f_0,\text{Seq})\mathbbm{1}_{\{\text{p.n.d.b.Seq}\}}(f_0)
=
\sum_{k=1}^{m}\sum_{\text{Seq$_k$}}P(f=f_0,\text{Seq}_k)\mathbbm{1}_{\{\text{p.n.d.b.Seq$_k$}\}}(f_0),
\end{aligned}
\end{align}
where Seq$_k$ denotes an arbitrary sequence of type $k$ (see Section \ref{sseq:old:6e}, where the structure of the considered class of the attacks is explained). Let us also recall that, according to our assumptions summarized in Section \ref{sseq:old:6f}, every sequence of type $k$ is equally likely, i.e. appears with the same probability~$r_k$. We further obtain
\begingroup\makeatletter\def\f@size{9.5}\check@mathfonts
\begin{align}
\begin{aligned}
\sum_{k=1}^{m}\sum_{\text{Seq$_k$}}P(f=f_0,\text{Seq}_k)\mathbbm{1}_{\{\text{p.n.d.b.Seq}\}}(f_0)
&=\sum_{k=1}^{m}\sum_{\text{Seq$_k$ with det. $f_0$-th box}}P(f=f_0|\text{Seq}_k)P(\text{Seq}_k)\\
&\overset{\text{Eq. (\ref{def:attack})}}{=}\sum_{k=1}^{m}\sum_{\text{Seq$_k$ with det. $f_0$-th box}}\frac{\lambda}{k}r_k
=\sum_{k=1}^{m}\frac{\lambda}{k}r_k{m-1\choose k-1}n^k,
\end{aligned}
\end{align}
\endgroup
where there are ${m-1\choose k-1}$ arrangements of $k$ deterministic boxes within a sequence of $m$ boxes, when one of them has the already fixed position (in the $f_0$-th place). Finally, we obtain
\begin{align}
\begin{aligned}
\sum_{k=1}^{m}\frac{\lambda}{k}r_k{m-1\choose k-1}n^k
\overset{\text{Eq. (\ref{eq:Pj_as_rj})}}{=}\sum_{k=1}^{m}\frac{\lambda}{k}\frac{P_k}{{m\choose k}}{m-1\choose k-1}
=\frac{\lambda}{m}\sum_{k=1}^{m}P_k\overset{\text{Eq. (\ref{eq:sum_P_j})}}{=}\frac{\lambda}{m},
\end{aligned}
\end{align}
which completes the proof Eq. (\ref{eq:lambda/m}) and hence also the proof of Lemma \ref{lem:f,seq}.
\end{proof}}

\section*{Appendix 4} 
\label{appendix3}

Here, we derive {certain} constraints on the linear program, {given in} Eq. (\ref{eq:bound}) {(see Section \ref{sec:P(ACC)} for the complete formulation of this linear program)}. 
Recall that $k := |l|$. Let us introduce 
disjoint sets $T_{k+s}$, $s\in\{0,\ldots,m-k\}$, such that $\bigcup_{s=0}^{m-k}T_{k+s}=\mathcal{C}^{\mathbf{l}}$. Every set $T_{k+s}$ consists of sequences with $k+s$ bad boxes and belongs to the cloud $\mathcal{C}^{\mathbf{l}}$, which simply means that { the positions of} $k$ detected bad boxes (with contradictions on measured edges) are {fixed}. Note that 
\begin{align}
	\label{eq:|T|}
	|T_{k+s}|=\binom{m-k}{s}(n-1)^s.
\end{align}
We now obtain
\begingroup\makeatletter\def\f@size{9.5}\check@mathfonts
\begin{align}
	\begin{aligned}
		P\left(f=i|\mathcal{C}^{\mathbf{l}}\right)=\frac{P(f=i,\mathcal{C}^{\mathbf{l}})}{Q_k}
		=\frac{1}{Q_k}\sum_{s=0}^{m-k}P(f=i,T_{k+s})
		=\frac{1}{Q_k}\sum_{s=0}^{m-k}\sum_{\text{Seq}_{k+s}\in T_{k+s}}P(f=i|\text{Seq}_{k+s})P(\text{Seq}_{k+s}).
	\end{aligned}
\end{align}
\endgroup
{
	Let us assume that $i$ is defining the position of some detected bad box, which means that $i$ is defining the position of a bad box in every Seq$_{k+s}$ belonging to cloud~$\mathcal{C}^{\mathbf{l}}$. 
	Following the definition of the attack (see Eq. (\ref{def:attack})), as well as Eq. (\ref{eq:|T|}), we obtain
\begingroup\makeatletter\def\f@size{9.5}\check@mathfonts	
	\begin{align}
		\begin{aligned}
			P\left(f=i|\mathcal{C}^{\mathbf{l}}\right)
			&=\frac{1}{Q_k}\sum_{s=0}^{m-k}\frac{\lambda}{k+s}\sum_{\text{Seq}_{k+s}\in T_{k+s}}r_{k+s}
			=\frac{1}{Q_k}\sum_{s=0}^{m-k}\frac{\lambda}{k+s}r_{k+s}|T_{k+s}|
			=\frac{1}{Q_k}\sum_{s=0}^{m-k}\frac{\lambda}{k+s}r_{k+s}\binom{m-k}{s}(n-1)^s.
		\end{aligned}
	\end{align}
	\endgroup
}

We further obtain (due to Eqs. (\ref{eq:SV_cloud}) and (\ref{eq:cloud_prob}))
\begingroup\makeatletter\def\f@size{9.5}\check@mathfonts
\begin{align}
	\begin{aligned}
		\sum_{s=0}^{m-k}&\frac{\lambda}{k+s}\binom{m-k}{s}(n-1)^sr_{k+s}
		\leq p_+^{\log m} Q_k
		=\sum_{s=0}^{m-k}p_+^{\log m}\binom{m-k}{s}(n-1)^sr_{k+s},
	\end{aligned}
\end{align}
\endgroup
 which gives
\begingroup\makeatletter\def\f@size{9.5}\check@mathfonts
\begin{align} 
	\begin{aligned}
		\sum_{s=0}^{m-k}\left(\frac{\lambda}{k+s}-p_+^{\log m}\right)\binom{m-k}{s}(n-1)^sr_{k+s}
		\leq 0.
	\end{aligned}
\end{align}
\endgroup
Then, according to the definition of $P_j$ (see Eq. (\ref{eq:Pj_as_rj})), we have
\begingroup\makeatletter\def\f@size{9.5}\check@mathfonts
\begin{align}
	\begin{aligned}
		\frac{1}{\binom{m}{k}n^k}\sum_{s=0}^{m-k}\left(\frac{\lambda}{k+s}-p_+^{\log m}\right)\binom{k+s}{k}\frac{(n-1)^s}{n^s}P_{k+s}
		\leq 0.
	\end{aligned}
\end{align}
\endgroup
Finally we obtain
\begingroup\makeatletter\def\f@size{9.5}\check@mathfonts
\begin{align}
	\begin{aligned}
		\sum_{s=0}^{m-k}\left(\frac{1}{k+s}-\frac{p_+^{\log m}}{\lambda}\right)\binom{k+s}{k}\left(\frac{n-1}{n}\right)^sP_{k+s}
		\leq 0.
	\end{aligned}
\end{align}
\endgroup

\section*{Appendix 5}
\label{appendix4}
\begin{proof}[Proof of Lemma \ref{lem:feasibility}] 
	{
		To show feasibility, we need to prove that all $m$ inequalities, given by Eq. (\ref{constraint_dual}), are satisfied. The proof falls into three steps.
	}
\begin{itemize}
	\item[I.]
	Let $u\leq v$. Suppose that constraints (\ref{constraint_dual}) for $k=u$ and $k=v$ are equalities. Then, since $y_2=y_3=\ldots=y_n=y_{n+2}=0$, we have
	\begin{align}
		\label{eq:constr_y}
		\begin{aligned}
			&u\left(\frac{n-1}{n}\right)^{u-1}\left(\frac{1}{u}-c_+\right)y_1+y_{m+1}=a^u,\\
			&v\left(\frac{n-1}{n}\right)^{v-1}\left(\frac{1}{v}-c_+\right)y_1+y_{m+1}=a^v.
		\end{aligned}
	\end{align}
	Suppose that 
	\begin{align}
		u=\frac{1}{c_+}\quad\text{and}\quad v=\frac{1}{c_+}+1.
	\end{align}
	Then, after subtracting Eqs. (\ref{eq:constr_y}), we obtain
	\begin{align}
		y_1=\frac{a^{1/c_+}-a^{1/c_++1}}{c_+\left(\frac{n-1}{n}\right)^{1/c_+}}\geq 0.
	\end{align}
	Further, we verify the remaining constraints:
	\begin{align}
		\label{eq:constraints_k}
		\frac{k\left(\frac{n-1}{n}\right)^{k-1}\left(\frac{1}{k}-c_+\right)a^{1/c_+}(1-a)}{c_+\left(\frac{n-1}{n}\right)^{1/c_+}}+a^{1/c_+}\geq a^{k}
	\end{align}
	\item[II.]
	Take $k<1/c_+$ and set $0<l=1/c_+-k$. Then $k\left(1/k-c_+\right)=1-kc_+=lc_+$ and we may write Eq.~(\ref{eq:constraints_k}) as follows:
	\begin{align}
		\label{eq:constraint_to_prove_1}
		\frac{l(1-a)}{\left(\frac{n-1}{n}\right)^{l+1}}+1\geq a^{-l}.
	\end{align}
	To justify that this is true, we carry out the following reasoning. First, note that
	\begin{align}
		\label{eq:p_min<1/n}
		(1-a)\leq \frac{1}{n},
	\end{align}
	which follows from the fact that the minimal biased probability always is lower than the uniform one. Hence, we obtain
	\begin{align}
		\label{eq:relation_a_n}
		a^{-l}\leq \left(\frac{n-1}{n}\right)^{-(l+1)}.
	\end{align}
	Now, it is enough to prove that
	\begin{align}
		\label{eq:to_prove_almost_final1}
		l(1-a)a^{-l}+1\geq a^{-l},
	\end{align}
	since it implies Eq. (\ref{eq:constraint_to_prove_1}), due to Eq. (\ref{eq:relation_a_n}). Let us write Eq. (\ref{eq:to_prove_almost_final1}) as follows:
	\begin{align}
		\label{eq:to_prove_final1}
	l(1-a)+a^l-1\geq 0.
	\end{align}
	We have
	\begin{align}
		\label{eq:first_derivative}
		\frac{d}{dl}\left(l(1-a)+a^l-1\right)=(1-a)+a^l\ln(a),
	\end{align}
	where $\ln$ is the natural logarithm. Note that, since $\ln(a)<0$, we have
	\begin{align}
		(1-a)+a^l\ln(a)\geq (1-a)+a\ln(a). 
	\end{align}
	Let us verify if
	\begin{align}
		(1-a)+a\ln(a) \geq 0,
	\end{align}
	which is equivalent to
	\begin{align}
		\text{e}^{\frac{1-a}{a}}\geq \text{e}^{-\ln(a)}=\frac{1}{a}.
	\end{align}
	Using the Maclaurin series expansion, we obtain
	\begin{align}
		1+\frac{1-a}{a}+\frac{1}{2!}\left(\frac{1-a}{a}\right)^2+\frac{1}{3!}\left(\frac{1-a}{a}\right)^3+\ldots\geq \frac{1}{a}
	\end{align}
	which is obviously true. Hence, the value of first derivative is positive for every natural $l$, which means that the function on the left hand side of (\ref{eq:to_prove_final1}) is monotonically increasing.    As a consequence, it is also non-negative, since for $l=1$ it is equal to zero. This completes the verification of the constraints for $k<1/c_+$. 
	\item[III.]
	Now, let $k>1/c_++1$. Set $\tilde{l}+1=k-1/c_+>0$. Analogously to the previous case, we may rewrite Eq. (\ref{eq:constraints_k}) in the following form:
	\begin{align}
		\label{eq:constraint_to_prove_2}
		1-a^{\tilde{l}+1}-(1-a)(\tilde{l}+1)\left(\frac{n-1}{n}\right)^{\tilde{l}}\geq 0.
	\end{align}
	Due to Eq. (\ref{eq:p_min<1/n}), we obtain 
	\begin{align}
		a^{\tilde{l}}\geq \left(\frac{n-1}{n}\right)^{\tilde{l}},
	\end{align}
	which implies that to prove Eq. (\ref{eq:constraint_to_prove_2}), it is enough to show that
	\begin{align}
		\label{eq:to_prove_final2}
		1-a^{\tilde{l}+1}-(1-a)(\tilde{l}+1)a^{\tilde{l}}\geq 0.
	\end{align}
	We obtain
\begingroup\makeatletter\def\f@size{9.5}\check@mathfonts	
	\begin{align}
		\begin{aligned}
			\frac{d}{d\tilde{l}}\left(1-a^{\tilde{l}+1}-(1-a)(\tilde{l}+1)a^{\tilde{l}}\right)
			&=-a^{\tilde{l}+1}\ln(a)-(1-a)a^{\tilde{l}}-(1-a)(\tilde{l}+1)a^{\tilde{l}}\ln(a)\\
			&\geq a^{\tilde{l}} \left(-a\ln(a)-(1-a)-2(1-a)\ln(a)\right).
		\end{aligned}
	\end{align} 
	\endgroup
	The derivative is positive, i.e.
	\begin{align}
		-a\ln(a)-(1-a)-2(1-a)\ln(a)\geq 0
	\end{align}
	if
	\begin{align}
		\ln\left(\frac{1}{a}\right)\geq \frac{1-a}{2-a}.
	\end{align}
	Note that it is enough to verify that
	\begin{align}
		\frac{1}{a}\geq \text{e}^{1-a}
	\end{align}
	and this is easily verified by the series expansions of $1/(1-x)$ and $\exp(x)$.
	Since, we established positivity of the first derivative for every natural $l$, we know that the function on the left hand side of Eq. (\ref{eq:to_prove_final2}) is increasing. As a consequence, the function is also non-negative, which follows from the result for $l=1$, namely that $1-a^{2}-2(1-a)a = (1-a)^2 \geq 0$. 
\end{itemize}
\end{proof}

\section*{Acknowledgements}
The authors would like to thank Roger Colbeck, Renato Renner, Christopher Portmann, and Gilles P\"utz for useful discussions. 
This work was supported by the EU grant RAQUEL, the ERC AdG QOLAPS and by the John Templeton Foundation. The opinions expressed in this publication are those of the authors and do not necessarily reflect the views of the John Templeton Foundation. 
	The work of K. Horodecki and M. Stankiewicz was supported by National Science Centre grant Sonata Bis 5 no. 2015/18/E/ST2/00327.
	The work of M. Paw\l{}owski was supported by the Foundation for Polish Science (FNP) grant First TEAM/2016-1/5. 
	The work of H. Wojew\'odka was supported by the Foundation for Polish Science (FNP).
	This paper was presented in part at the Randomness in Quantum Physics and Beyond conference, Institute of Photonic Sciences, Barcelona, Spain, May 2015,
	at the 13th Central European Quantum Information Processing Workshop, Valtice, Czech Republic, June 2016,
	at the 3rd Seefeld Workshop on Quantum Information, Seefeld, Austria, June 2016,
	at the 6th International Conference on Quantum Cryptography (QCrypt), Washington, DC, September 2016,
	and at the 11th Conference on the Theory of Quantum Computation, Communication and Cryptography, Berlin, Germany, September 2016.

\bibliography{bibliography2}
\bibliographystyle{plain}

\end{document}